\newcommand{\ba}{\begin{align*}}
\newcommand{\eaa}{\end{align*}}
\newcommand{\nl}{\notag\\}
\newcommand{\half}{\frac 1 2}
\newcommand{\calT}{{\cal T}}
\newcommand {\bx} {\mbox{\boldmath $x$}}
\newcommand {\bX} {\mbox{\boldmath $X$}}
\newcommand {\by} {\mbox{\boldmath $y$}}
\newcommand {\bE} {\mbox{\boldmath $E$}}
\newcommand{\eqd}{\stackrel{\triangle}{=}}
\newcommand{\sbr}[1] {\left[#1\right]}
\newcommand{\cbr}[1] {\left\{#1\right\}}
\newcommand{\rbr}[1] {\left(#1\right)}
\newtheorem{theorem}{Theorem}
\newtheorem{lemma}{Lemma}
\newtheorem{corollary}{Corollary}
\newtheorem{example}{Example}
\newtheorem{proposition}{Proposition}
\def\wh{\widehat}
\def\mb{\mathbb}
\def\m{\mathcal}
\def\md{\mathds}
\def\eps{\varepsilon}
\def\tn{\textnormal}
\newcommand{\dfn}{ \stackrel{\tn{def}}{=} }
\def\vmax{v_{\textrm{max}}}
\newtheorem{remark}{Remark}
\begin{document}

\IEEEoverridecommandlockouts

\title{Searching with Measurement Dependent Noise}

\author{Yonatan~Kaspi, Ofer~Shayevitz and Tara~Javidi\thanks{Y. Kaspi and T. Javidi are with the Information Theory and Applications (ITA) Center at the University of California, San Diego, USA. O. Shayevitz is with the Department of EE--Systems, Tel Aviv University, Tel Aviv, Israel. Emails: \{ofersha@eng.tau.ac.il, yonikaspi@gmail.com, tjavidi@ucsd.edu\}. The work of O. Shayevitz was partially supported by an ERC grant no. 639573, and a CIG grant no. 631983. Tara Javidi's work was partially supported by NSF grants CNS-1329819 and CNS-1513883. This paper was presented in part at the Information Theory Workshop 2014, Hobart, Tasmania, Australia. }}

\maketitle

\begin{abstract}
Consider a target moving at a constant velocity on a unit-circumference circle, starting at an arbitrary location. To acquire the target, any region of the circle can be probed to obtain a noisy measurement of the target's presence, where the noise level increases with the size of the probed region. We are interested in the expected time required to find the target to within some given resolution and error probability. For a known velocity, we characterize the optimal tradeoff between time and resolution, and show that in contrast to the well studied case of constant measurement noise, measurement dependent noise incurs a multiplicative gap in the targeting rate between adaptive and non-adaptive search strategies. Moreover, our adaptive strategy attains the optimal rate-reliability tradeoff. We further show that for optimal non-adaptive search, accounting for an unknown velocity incurs a factor of at least two in the targeting rate. 
\end{abstract}

\section{Introduction}
Suppose a point target is arbitrarily placed on the unit-circumference circle. The target then proceeds to move at some constant, but possibly unknown velocity $v$. An agent is interested in determining the position of the target and its velocity to within some given resolution $\delta$, with an error probability at most $\eps$, as quickly as possible. To that end, the agent can probe any region of his choosing (contiguous or non-contiguous) on the circle for the presence of the target, say once per second. The binary outcome of the probing (hit/miss) is observed by the agent through a noisy memoryless \textit{observation channel}. The agent's strategy can be either adaptive or non-adaptive; in the former case, the probed regions can depend on past observations while in the latter case the regions are set in advance. While adaptive strategies seem natural in a search setting, non-adaptive strategies are also of much interest; such strategies can be executed in parallel and therefore take significantly less time (but not less queries) than their adaptive counterparts. Our goal is to characterize the relation between $\eps$, $\delta$, and the \textit{average sample complexity}, which is the expected time $\mb{E}(\tau)$ until the agent's goal is met, for both adaptive and non-adaptive search strategies. 

Our model is unique in allowing the observation channel to depend not only on whether the target was hit or missed, but also on the \textit{size} of the probed region, where the observation channel generally becomes ``noisier'' as the size of the probed region increases. For example, the output channel can be a binary symmetric channel whose crossover probability increases as a function of the probed region size, or a binary input, additive Gaussian channel where the noise mean and variance depend on whether the target was hit or missed, as well as on the size of the probed region. This model is practically motivated if one imagines that the circle is densely covered by many small sensors; probing a region then corresponds to activating the relevant sensors and obtaining a measurement that is a function of the sum of the noisy signals from these sensors, hence the more sensors the higher the overall noise level. In fact, this model fits many practical search scenarios in which the choice of the specific physical mechanism with which one probes affects the quality and the amount of the data that is being gathered. For example, increasing the sensitivity/resolution of the acquiring device allows for better handling of noise, but at the same time increases the amount of data and hence the processing time to find the target. As a example, suppose we are searching for a small target in a vast area of the ocean (e.g., a life boat or debris) using satellite imagery. To cover such a vast area, we may choose to use relatively few images of large areas, or alternatively many ``zoomed in'' images of smaller areas. Compared to the zoomed-in images, the images covering large areas will have many artifacts that might look like the object being sought (e.g., clouds, breaking waves, unrelated debris etc.) and will therefore tend to be generally noisier than the zoomed-in images. On the other hand, acquiring enough zoomed-in images to cover a vast area can take too long to be practically feasible. There are many more such cases where the noise, accuracy and amount of required processing depend on the choice of the probing mechanism; examples include microphone sensitivity in a microphone array, ISO setting in digital imagery, biometric readings, biological tests (e.g., blood tests), DNA sequencing, etc.

When the observation channel is noiseless and the target is stationary, the optimal adaptive search algorithm is the bisection search, where at each stage we probe half of the region where the target is already known to be, and the next phase zooms in on the correct half. For a resolution $\delta$, one needs no more than $\lceil\log\frac 1 {\delta}\rceil$ queries to find the target. It is easy to show (see also the discussion that follows Theorem \ref{thrm:non-adapt}) that a non-adaptive search strategy can return the correct location with the same number of queries. Both these adaptive and non-adaptive strategies can be easily modified to find targets with known velocities. However, the classical bisection search algorithm fails completely when noise is present, since once a wrong region is selected as containing the target, the correct location will never be found. Dealing with noise in search problems has been an active research topic for many years, and some attention has been given to the fundamental limits of searching with noise from an information theoretic perspective. However, previous works focused only on stationary targets and noise models that do not depend on the search region size (\textit{measurement independent noise}). We now survey some relevant literature. 

\subsection{Previous Works}

Noisy search problems were first introduced by Renyi \cite{renyi1961problem} in the context of a two-player ``20 questions game'', in which the first player thinks of a number in some given range while the second player needs to guess this number by asking questions that have binary answers. Renyi considered the setting where the second player is known to lie with some given probability. A variation of Renyi's problem has been independently proposed by Ulam \cite{ulamadventures} (commonly known as Ulam's game) where in Ulam's setting the second player can lie adversarially, but is limited to some given number of lies. Ulam's setting and some extensions that impose various restrictions on the noise sequence have been extensively investigated in the last few decades, see \cite{pelc2002searching} and references therein. Ulam's game and its variations can also be thought of (as further discussed below) as a communication problem over a channel with noiseless feedback; this angle has been extensively studied in Berlekamp's remarkable PhD dissertation \cite{Berlekamp-Thesis}. 

In a classical work that is closer to our setting, Burnashev and Zigangirov \cite{burnashev1974interval} were the first to study a continuous search problem for a stationary target on the unit interval, from an information theoretic perspective. In their model, an agent gives the system a point on the unit interval and receives a binary answer that indicates whether the target is to the left or to the right of the given point. This binary answer might be incorrect with probability $p$, where $p$ is fixed and known (measurement independent). It is not difficult to observe (see e.g. \cite{burnashev1974interval}) that this search problem is equivalent to the problem of channel coding with noiseless feedback over a Binary Symmetric Channel (BSC) with crossover probability $p$, where the message corresponds to the target, the number of messages pertains to inverse of the resolution, and the channel noise plays the role of measurement noise; the existence of noiseless feedback pertains to the fact that the agent may use past measurements to adapt his probing strategy. The adaptive scheme used in \cite{burnashev1974interval} is based on a quantization of the Horstein scheme for communication over a BSC with crossover probability $p$ and noiseless instantaneous feedback \cite{Horstein63}, which is a generalization of the clean bisection search to the noisy case. Roughly speaking, at each instance, this scheme queries the point that equally bisects the posterior of the target location given the past observations. The optimality of the Horstein scheme for the communication problem was later established by Shayevitz and Feder \cite{shayevitz2007communication,Shayevitz11,shayevitz2016simple}, who also showed it to be an instance of the more general \textit{posterior matching} scheme. 

More recent works on the noisy 20 questions problem include the work of Jedynak et al. \cite{jedynak2012twenty}, in which a stochastic control point of view was taken in order to find the optimal (adaptive) search strategy for a target in $\mathbb{R}^n$, under the (much weaker) entropy cost constraint. This work was extended in \cite{Tsiligkaridis20Q2014} to include a multi-user collaborative scenario. A universal version of Ulam's game in its channel coding interpretation in the context of empirical capacity has been introduced Shayevitz and Feder \cite{ShayevitzFeder2005,ShayevitzFeder2009}, further studied by Eswaran \textit{et al.} \cite{EswaranEtAl2010}, and later generalized by Lomnitz and Feder \cite{LomnitzFeder2011}. Novak \cite{Nowak2011} found conditions under which generalized noisy binary search will find the target within $O(\log 1/\delta)$ samples.  In all the aforementioned works, the underlying adaptive procedure is some variation of the Horstien scheme. It was shown in \cite{Nowak2011} that the Horstein scheme achieves the optimal $O(\log 1/\delta)$ sample complexity. However, it was shown that even a repetition-code-flavored search where each question is asked multiple times and the prevailing answer is given as an input to the classical noiseless bisection search also achieves the same sample complexity. Clearly, the exact behavior of the sample complexity between the repetition and the Horstein schemes is considerably different. In contrast to these latter works, we are interested in a more refined analysis that determines not only the optimal order ($O(1/\delta)$), but also an exact characterization of the number of required queries as a function of the noise statistics, the resolution $\delta$, and the probability of error $\eps$. 

For the case of adaptive strategies, the search problem with measurement dependent noise considered in this paper can be viewed as a one-dimensional special case of the two-dimensional noisy search problem introduced and investigated by Naghshvar and Javidi \cite{NaghshvarJavidi}. Relying on the notion of \textit{Extrinsic Jensen-Shannon (EJS) divergence}, a pair of (non-asymptotic) lower and upper bounds were obtained in \cite{NaghshvarJavidi}. Furthermore,  numerical examples provided in \cite{NaghshvarJavidi} show that a fully adaptive (with exponential computational complexity) algorithm known as the \textit{MaxEJS algorithm} in many scenarios performs close to the obtained lower bound. As part of our current work, the achievability of the lower bound in \cite{NaghshvarJavidi} is established via a simple three-phase search strategy. 

Another line of work related to our search problem is on group testing. In a group testing problem, there are $K$ defective items out of a total of $N>K$ items. Any subgroup of the $N$ items can be probed for defects and a binary answer is received, indicating whether the selected subgroup contains defects or not. The seminal work by Drofmann \cite{Dorfman1943} that initiated work on this problem, has been aimed at finding syphilitic men among the millions of US draftees during World War II. Instead of taking a blood sample from each draftee and sending it to the lab, samples were mixed (grouped) together and tested. Only participants of grouped samples that tested positive were individually diagnosed. It was shown that such testing procedure can significantly reduce the total number of lab tests required to find all $K$ infected samples. A discrete version of our search problem can be viewed as a group testing problem with $K=1$ and  $N=1/\delta$. The vast majority of works on group testing deal with the noiseless setting, i.e., assume that the defectiveness test can make no errors. Recently Atia and Saligrama \cite{AtiaSaligrama2012} considered the problem of noisy group testing from an information theoretic perspective (see reference therein for related work on group testing). However, the model where the noise depends on the size of the group has never been considered; it was left as an open problem in \cite{aldridge2011interference}, see also reference therein. The measurement dependent model is extremely relevant in group testing; for instance, in the blood test example, if many samples are mixed into one vial then the amount taken from each individual is very small and the probability to detect positive samples can be much lower compared to the case where only a small number of samples are mixed together into one vial. In \cite{MeGT2015} we extend the setting of the current paper to include stationary multiple targets and measurement dependent noise. The setting considered in \cite{MeGT2015} includes group testing as a special case.

In this work we take the perspective of \cite{burnashev1974interval}, viewing the search problem as a channel coding problem. Let us take the case of a BSC observation channel as an example. When the crossover probability $p$ is measurement independent, then based on the results of \cite{burnashev_exp} for optimal error exponents for channel coding with feedback, it can be readily shown that using adaptive strategies one can achieve 
\begin{equation*}
\mb{E}(\tau) = \frac{\log{(1\slash\delta)}}{C(p)} + \frac{\log{(1\slash\eps)}}{C_1(p)} + \mathrm{O}(\log\log{\frac{1}{\delta\eps}})
\end{equation*}
where $C(p)$ is the Shannon capacity of the BSC with crossover probability $p$, and $C_1(p) = D(p\|1-p)$. This result is also the best possible up to sub-logarithmic terms. For non-adaptive strategies, given the relations between non-adaptive search and non-feedback channel coding, standard random coding exponents results \cite{GallagerBook} indicate for any fixed $0<R<C(p)$ there exists a strategy such that 
\begin{equation*}
\tau = \frac{\log{(1\slash\delta)}}{R},\qquad \log{(1\slash\eps)} = \frac{E(R,p)}{R}\cdot \log{(1\slash\delta)}
\end{equation*}
where $E(R,p)$ is the reliability function of the BSC, for which bounds are known \cite{GallagerBook}. Hence, the minimal sample complexity (with a vanishing error guarantee) is roughly the same for adaptive and non-adaptive strategies in the limit of high resolution $\delta\to 0$, and is given by $\mb{E}(\tau) \approx \frac{\log{(1\slash\delta)}}{C(p)}$. This directly corresponds to the fact that feedback does not increase the capacity of a memoryless channel \cite{shannon_zero_error}. Indeed, given the analogy between search and channel coding, it is not surprising that in all of the aforementioned works that considered both adaptive and non-adaptive strategies, there is no advantage for adaptive strategies in term of sample complexity\footnote{Note that adaptive search strategies do however exhibit superior performance over non-adaptive strategies for a fixed resolution, attaining the same error probability with a lower expected search time. They are also asymptotically better if a certain exponential decay of the error probability is desired, which directly corresponds to the fact that the Burnashev exponent \cite{burnashev_exp} exceeds the sphere packing bound \cite{GallagerBook} at all rates below capacity.}. As we shall see, this is on longer true when the crossover probability of the BSC can change as a function of the measurement, or more generality, when the observation channel is measurement-dependent. 

\subsection{Our Contributions}
This work has several contributions. We begin with stationary targets and show that, in contrast to case of measurement independent noise where adaptive and non-adaptive schemes essentially achieve the same sample complexity, when the observation channel depends on the query size then there is a multiplicative gap between the minimal sample complexity for adaptive vs. non-adaptive strategies, in the limit of high resolution. This \textit{targeting rate} gap generally depends on the variability of the observation channel noise with the size of the probed region, and can be arbitrarily large. The source of the difference lies mainly in the fact that from a channel coding perspective, the channel associated with measurement dependent noise is time-varying in quite an unusual way; it depends on the choice of the \textit{entire codebook}. The analogy we draw to channel coding allows us to treat any output alphabet and not only the binary output case. Moreover, we show that ``dithering'' allows us to consider arbitrary target location (i.e., no distribution is assumed on the target location). The maximal targeting rates achievable using adaptive and non-adaptive strategies under known velocity are given. It is shown that the optimal adaptive strategy does not need to be fully sequential (as in Horstein's scheme \cite{Horstein63}), and that a three phase search, where each phase is composed of a non-adaptive search strategy (but depends on past phases output) is enough to achieve the best possible performance. Thus, interestingly, the optimal adaptive scheme is amenable to parallelization. In addition, a rate-reliability tradeoff analysis is provided for the proposed adaptive and non-adaptive schemes. It is shown that the former attains the best possible tradeoff, i.e., achieves the Burnashev error exponent \cite{burnashev_exp}. 

Finally, we show how to extend the above results to the case of non-stationary targets. For unknown velocity, the maximal targeting rate achievable using non-adaptive schemes is shown to be reduced by a factor of at least two relative to the case of known velocity. Unlike the results for stationary targets, here, there is a gap between our direct and converse theorems. This gap diminishes as the unknown velocity becomes small. Intuitively, when the target moves at constant velocity, we need to locate it at two different time points and so that  its velocity could be calculated through the distance between the target locations. Following this intuition, it is expected that one would need at least double the samples as compared to the stationary case. However, acquiring the target is much more involved when the target velocity is unknown, and hence while the converse theorem follows the same lines as the known velocity case, the search strategy and its analysis are more intricate. If the velocity is known, it is easy to see that the stationary case results hold verbatim. 

\subsection{Organization}
The rest of this paper is organized as follows. We begin with searching for a stationary target under measurement dependent noise. In Section \ref{Sec:Prelim} we give the notation used throughout the paper and formally present the stationary target search problem. Sections \ref{Sec:NonAdapt} and \ref{Sec:STAdaptive} deal with non-adaptive and adaptive search strategies respectively. In Section \ref{Sec:MovingT}, we extend our results to include unknown target velocity. We conclude this work and discuss several interesting future directions in Section \ref{Sec:Conclusion}.

 exponent of the BSC induced by the least noisy measurement.  

\section{Preliminaries}\label{Sec:Prelim}
\subsection{Notations}
Random variables are denoted by upper case letters (e.g., $X$,$Y$) and their realizations are denoted by lower case letters. Similarly, random vectors of length $N$ are denoted by boldface letters (e.g., $\bX$ and $\bx$). Alphabets (discrete or continuous) are denoted by calligraphic letters, e.g., $\m{X}$. The Shannon entropy of a random variable (r.v.) $X$ is denoted by $H(X)$. The cardinality of a finite set $S$ is denoted by $|S|$. The Lebesgue measure of a set $S\subset\mb{R}$ is similarly denoted by $|S|$. We write $\mathds{1}(\cdot)$ for the indicator function. 
A random mapping (channel) from $X\in\m{X}$ to $Y\in\m{Y}$ is denoted by $P(y|x)$. When the channel depends on a parameter $q$, the channel is denoted by $P_q(y|x)$. For binary input channels $P_q(y|x)$, the KL divergence between the output distribution under input '1' and the output distribution under input '0' is denoted by $C_1(q)$. Namely, $$C_1(q) \triangleq  D(P_q(y|x=1)|| P_q(y|x=0)).$$ 
The mutual information between two jointly distributed r.v.s $X$ and $Y$ is denoted by $I(X;Y)$. When $X$ is binary with $\Pr(X=1)=p$ and the channel between $X$ and $Y$ depends on a parameter $q$, we denote the mutual information between $X$ and $Y$ by $I_{XY}(p,q)$. The \textit{capacity} of a binary input channel $P_q(y|x)$ is denoted by $C(q)$, i.e., $$C(q) \triangleq \max_p I_{XY}(p;q).$$

\subsection{Setup}
Let $w\in[0,1)$ be the initial position of the target, arbitrarily placed on the unit interval. 
At time $n\in\cbr{1,2,\ldots,N}$, an agent may seek the target by choosing (possibly at random) any measurable \textit{query set} $S_n\subset [0,1)$ to probe. Let $Q_n\eqd |S_n|$, namely, $Q_n$ denotes the total size of the search region at time $n$. Let $X_n = \md{1}(w_n\in S_n)$ denote the clean \textit{query output} which is binary signal indicating whether the target is in the probed region. The agent's observation at time $n$, will be denoted by $Y_n \in \cal Y$, where $\cal Y$ can be either discrete or continuous. Given $X_n=x_n$, the query observed output is governed by the \textit{observation channel} $P_{q_n}(y_n|x_n)$, which is a binary-input, $\m{Y}$-output channel that can depend on the query region size $q_n$. We will only deal with channels for which $p(y|x)$ is a continuous function of $q$ for any $0\leq q \leq 1$, $C_1(q) < \infty$\footnote{This means that the input cannot be deduced without error from the output. All the results of this paper hold trivially for clean channels.}. We will further assume that the query channel does not become worse as the search region shrinks, namely, for $q_1<q_2$, $C(q_1) \geq C(q_2)$ and $C_1(q_1) \geq C_2(q_2)$. 
We illustrate the dependence of the channel on the size of the query region in the following examples, which will serve as running examples throughout the paper. The first depicts a symmetric binary output setting while the second, a non-symmetric setting with continuous output alphabet.
\begin{example}\label{Ex:BSC}
Suppose the query output is binary. The agent obtains a corrupted version $Y_n$ of $X_n$, with noise level that corresponds to the size of the query region $q_n$. Specifically, 
\begin{align*}
Y_n = X_n + Z_n \;(\textrm{mod}\;2),  
\end{align*}
where $Z_n \sim \textrm{Bern}\left(p[q_n]\right)$, and where $p:(0,1]\mapsto [0,1/2)$ is a linear function $p=aq_n + b$. for some $a,b\geq 0$ such that $0<p<\half$. In this example, when $a>0$ the output becomes ``noisier'' as the query region size increases. When $a=0$, the query output is given by a binary symmetric channel (BSC) with crossover probability $b$. 
\end{example}

\begin{example}\label{Ex:Gaussian}
Suppose the output is Gaussian r.v. whose distribution depends both on the query size and on whether the target was hit or missed. Specifically, for constants $\mu,a\leq b$, if the target was hit ($X=1$), $Y\sim \m{N}(\mu,1+aq)$  and if the target was missed $X=0$, $Y\sim\m{N}(0,2+bq)$. This is an example where the channel behaves differently when the target is hit or missed. When hit, the signal has a constant component and is more concentrated then when missed.  
\end{example}

A \textit{search strategy} is a causal protocol for determining the sets $S_n = S_n(Y^{n-1})$, associated with a stopping time $\tau$ and estimator $\wh{W}_\tau = \wh{W}_\tau(Y^\tau)$ for the target position. A strategy is said to be \textit{non-adaptive} if the choice of the region $S_n$ is independent of $Y^{n-1}$, i.e., the sets we probe do not depend on the observations. In such a case, the stopping time is also fixed in advance. Otherwise, the strategy is said to be \textit{adaptive}, and may have a variable stopping time. A strategy is said to have \textit{search resolution} $\delta$ and error probability $\eps$ if for any $w$,
\begin{equation*}
\Pr(|\wh{W}_\tau-w| \leq \delta) \geq 1-\eps.
\end{equation*}
We are interested in the expected search time $\mb{E}(\tau)$ for such strategies, and specifically in the \textit{maximal targeting rate}, which is the maximal ratio $\frac{\log{1\slash\delta}}{\mb{E}(\tau)}$ such that $\eps\to 0$ is possible as $\delta\to 0$. The targeting rate captures the best per query exponential shrinkage of the uncertainty region when the number of queries becomes large, and plays exactly the same role as rate in a channel coding problem (this will become evident in the proof of the direct part of Theorem \ref{thrm:non-adapt}). We say that a sequence of strategies indexed by $k$ achieves a \textit{targeting rate} $R$ and an associated \textit{targeting reliability} $E = E(R)$, if $\delta_k\to 0$ as $k\to \infty$ and
\begin{equation*}
\mb{E}(\tau_k) \leq \frac{\log{(1\slash\delta_k)}}{R},\qquad \log{(1\slash\eps_k)} \geq \frac{E}{R}\cdot \log{(1\slash\delta_k)}
\end{equation*}
for all $k$ large enough. The quantity $E(R)$ is sometimes also referred to as the \textit{rate-reliability tradeoff}. 
 

\section{Stationary Target - Non-adaptive strategies} \label{Sec:NonAdapt}
We state our main result for the non-adaptive case. 
\begin{theorem}\label{thrm:non-adapt}
Let $P_q(y|x)$ be the channel between the query result and the query output for a query region of size $q$. For non-adaptive search strategies, the maximal targeting rate is given by 
  \begin{equation}\label{eq:rate_non_adaptive}
    \max_{q\in(0,\frac{1}{2})}  I_{XY}(q,q)
  \end{equation}
where $X$ is a binary r.v. with $\Pr(X=1)=q$. Moreover, for any $R$ below the maximal targeting rate, there exist a non-adaptive search strategy such that 
\begin{equation*}
\tau = \frac{\log{(1\slash\delta)}}{R},\qquad \log{(1\slash\eps)} = \frac{E_r(R,q^*)}{R}\cdot \log{(1/\delta)}
\end{equation*}
where $q^*$ is the maximizer in \eqref{eq:rate_non_adaptive}, and 
\begin{equation*}
  E_r(R,q^*) = \max_{\rho\in (0,1)} E_0(\rho,q^*) - \rho R
\end{equation*}
is the random coding exponent \cite{GallagerBook} for the channel  $P_{q^*}(Y|X)$ with input distribution $q^*$, at rate $R$. 
\end{theorem}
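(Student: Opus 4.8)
The plan is to translate the non-adaptive search problem into a channel coding problem with a fixed codebook, derive the rate formula as the capacity of an induced channel, and handle the arbitrary target location via dithering.

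First I would set up the reduction to channel coding. Fix a resolution $\delta$ and partition $[0,1)$ into $M = 1/\delta$ cells (assume $1/\delta$ integer; otherwise round). A non-adaptive strategy picks sets $S_1,\ldots,S_N$ in advance; since the relevant information is only which cells each $S_n$ contains, we may assume each $S_n$ is a union of cells. This defines, for each cell $m$, a binary codeword $\bx(m) = (x_1(m),\ldots,x_N(m))$ with $x_n(m) = \mathds{1}(m \subset S_n)$ — i.e., an $M \times N$ binary codebook whose columns are the indicator vectors of the query sets. The key subtlety, as the introduction flags, is that the channel at time $n$ is $P_{q_n}(y|x)$ where $q_n = |S_n|$ equals the \emph{fraction of ones in column $n$ of the codebook}, so the channel is determined by the codebook itself. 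The receiver observes $\bY$ and must output an estimate of the target cell; a targeting resolution-$\delta$, error-$\eps$ strategy is exactly a code for this (column-dependent) channel with $\Pr(\text{error}) \le \eps$.

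Next, the converse (upper bound on the rate). I would argue that if column $n$ has a fraction $q_n$ of ones, then the $n$-th coordinate of a uniformly-drawn codeword is $\mathrm{Bern}(q_n)$, so the mutual information contributed at time $n$ is at most $\max_p I_{XY}(p, q_n)$ — but crucially the \emph{same} parameter $q_n$ governs both the input distribution (the column weight) and the channel noise. One shows $\max_p I_{XY}(p,q_n) \le \max_{q} I_{XY}(q,q)$: indeed if $q_n \le 1/2$ the maximizing $p$ may differ from $q_n$, but one can use the monotonicity assumption ($C(q)$ nonincreasing in $q$) together with the observation that using a column weight $p > q_n$ would only be consistent with a noisier channel — a careful exchange argument caps the per-symbol information at $\sup_{q \in (0,1/2)} I_{XY}(q,q)$ (weights above $1/2$ are wasteful by symmetry of the cell-labeling, so the sup is over $(0,1/2)$). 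Then Fano's inequality gives $\log M \le \sum_{n=1}^N I(X_n;Y_n) + N \eps \log M + 1 \le N \max_q I_{XY}(q,q) + o(N\log M)$, yielding $\frac{\log(1/\delta)}{N} \le \max_q I_{XY}(q,q)$ in the limit, i.e., the rate cannot exceed \eqref{eq:rate_non_adaptive}.

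Finally, the achievability with the random-coding exponent. Fix $q^* = \arg\max_{q} I_{XY}(q,q)$ and a rate $R$ below the max. Generate an $M \times N$ codebook with i.i.d. $\mathrm{Bern}(q^*)$ entries, $M = e^{NR}$ ($N = \frac{1}{R}\log(1/\delta)$). With high probability every column has weight concentrated at $q^* + o(1)$, so by the continuity of $P_q(y|x)$ in $q$ the induced time-varying channel is, up to vanishing perturbation, the memoryless channel $P_{q^*}(y|x)$; a standard continuity/typicality argument absorbs the $o(1)$ column-weight fluctuations into a negligible rate loss. Applying Gallager's random-coding bound to $P_{q^*}$ with input distribution $q^*$ gives an (average, hence also maximal after expurgation) error probability $\le e^{-N E_r(R,q^*)}$, i.e., $\log(1/\eps) = \frac{E_r(R,q^*)}{R}\log(1/\delta)$. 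The remaining wrinkle is that the target position $w$ is \emph{arbitrary}, not random — the decoder locates a cell, but $w$ need not sit nicely inside one, and an adversarial $w$ on a cell boundary could be misreported while still being within $\delta$. I would resolve this by the dithering trick mentioned in the contributions: pick a uniform random shift $U \in [0,\delta)$ of the cell grid (known to both agent and analyst), so that conditioned on $U$ the target lies in a well-defined cell and, after decoding that cell, any point in it is within $\delta$ of the estimate; averaging the error over $U$ and over the random codebook shows a good deterministic (codebook, dither) pair exists for every $w$. The main obstacle I anticipate is the converse step — pinning down rigorously that the per-symbol information is maximized precisely when the input weight equals the channel parameter (the ``$I_{XY}(q,q)$'' coupling), since the optimizing input distribution for a \emph{fixed} channel $P_q$ is generally not $\mathrm{Bern}(q)$, and one must use the structural constraint that column weight and noise parameter are forced to coincide, plus the monotonicity hypothesis, to rule out doing better.
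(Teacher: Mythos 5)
Your overall architecture matches the paper's: the codebook reduction, the i.i.d.\ $\mathrm{Bern}(q^*)$ random codebook with column-weight concentration and continuity in $q$ for achievability, and Fano's inequality for the converse. But your converse contains a genuine gap. You upper-bound the per-symbol information by $\max_p I_{XY}(p,q_n)$ and then claim that ``a careful exchange argument'' plus monotonicity yields $\max_p I_{XY}(p,q_n)\le \max_q I_{XY}(q,q)$. That inequality is false in general: $\max_p I_{XY}(p,q_n)$ is the capacity $C(q_n)$ of the fixed channel $P_{q_n}$, and for small $q_n$ it approaches $C(0)$, which typically \emph{exceeds} $\max_q I_{XY}(q,q)$ --- this gap is precisely what the adaptive schemes of Section \ref{Sec:STAdaptive} exploit to beat the non-adaptive rate. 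No exchange argument can rescue the inequality because it is not true. The correct (and much simpler) mechanism, which you gesture at but do not pin down, is the one the paper uses: prove the converse under the weaker requirement that the target is uniform on $[0,1)$ (a valid relaxation, since a scheme guaranteeing error $\eps$ for every $w$ also does so for uniform $W$). Then $X_n=\mathds{1}(W\in S_n)$ is \emph{exactly} $\mathrm{Bern}(q_n)$ with $q_n=|S_n|$, because the hit probability equals the Lebesgue measure of the query set; hence $I(X_n;Y_n)=I_{XY}(q_n,q_n)$ with no maximization over $p$ at all, and the bound $\sum_n I_{XY}(q_n,q_n)\le N\sup_q I_{XY}(q,q)$ is immediate. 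The input weight and the channel parameter coincide by construction, not by an optimization argument.

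Two smaller points on achievability. First, your ``standard continuity/typicality argument'' glosses over a real technical issue the paper handles explicitly: conditioning on the high-probability event that all column weights are near $q^*$ destroys the independence of the codewords, so Gallager's bound cannot be applied verbatim; the paper shows $\Pr(\bx_{k'}\mid\bx_k,\m{A})\le Q(\bx_{k'})(1+\epsilon')$ with $\epsilon'$ vanishing double-exponentially, which restores the argument. Second, averaging the error over the dither and the codebook for each fixed $w$ does not produce a single deterministic (codebook, dither) pair that is good \emph{for every} $w$ simultaneously; the paper instead keeps the dither as part of a randomized strategy (which the problem formulation permits), so the per-$w$ guarantee holds in expectation over the dither. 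Neither of these is fatal, but the converse step as you propose it would not go through.
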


Theorem \ref{thrm:non-adapt} clearly bears many similarities to the standard channel coding theorem, with some (important) differences. We will start by building intuition that will clarify the connection to channel coding, and then highlight the distinctions. This discussion will be followed by formal proofs of the direct and converse parts of Theorem 1.

\subsection{Discussion and Intuition}\label{Sec:Discussion}
We begin by adapting the result of the Theorem to Example \ref{Ex:BSC}. The mutual information in the binary case is given by 
\begin{align}
I_{XY}(q;q) = h_2(q*p(q))-h_2(p(q)). \label{eq:Ipq}
\end{align}
In contrast to the mutual information expression between input and output of a BSC where the noise is given and does not depend on the transmission scheme, here the second term in \eqref{eq:Ipq} depends on the input prior as well. We plot the dependence of \eqref{eq:Ipq} on $q$ for our specific choice of $p(q)$ which is given by a linear function of $q$ in Fig. \ref{Fig:LinIpq}. As can be seen, our mutual information functional is no longer concave in the input distribution as in the classic channel coding setup, due to the effect of measurement dependence.
We also numerically plot $I_{XY}(q;q)$ for the Gaussian case of Example \ref{Ex:Gaussian} in Fig. \ref{Fig:GauusianIq}\footnote{Here, the output, $Y$, is a mixture of two Gaussians with weights $q,1-q$. Since there is no closed form expression for the differential entropy of a Gaussian mixture, we approximate it numerically}. As can be seen, when the dependence of the variance of the output on the query size is strong, the optimal query size becomes small. Also, since in this example the signal is more concentrated when the target is hit, the optimal query size for weaker dependence contains more than half of the queried region.
\begin{figure} 
\centering 
\begin{subfigure}{0.45\textwidth}
 \includegraphics[width=\textwidth, height=0.2\textheight]{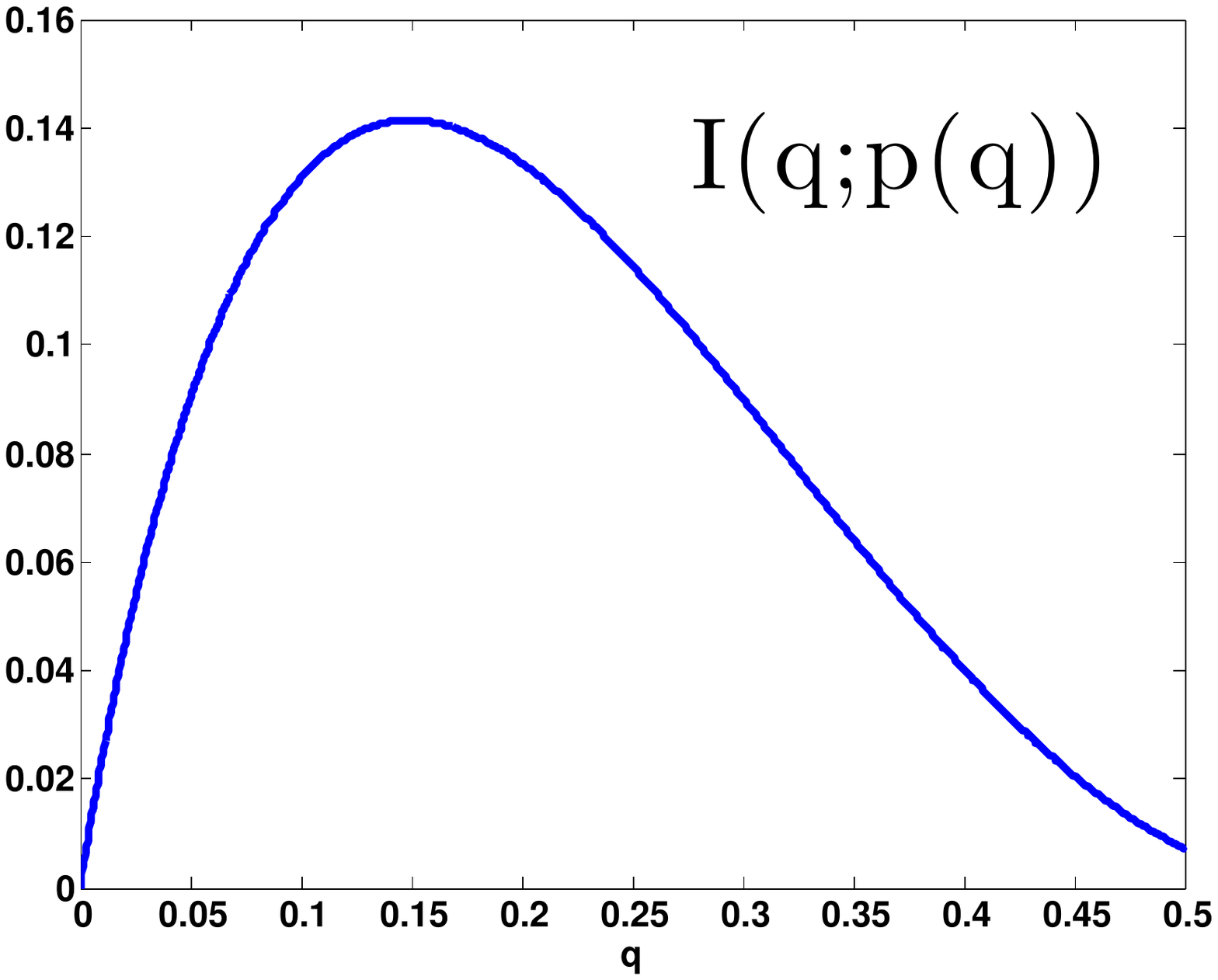}
 \caption{$I_{XY}(q;q)$ as a function of $q$ for Example \ref{Ex:BSC} with $a=0.7$ and $b=0.1$}
 \label{Fig:LinIpq}
 \end{subfigure}
\begin{subfigure}{0.45\textwidth}
 \includegraphics[width=\textwidth, height=0.2\textheight]{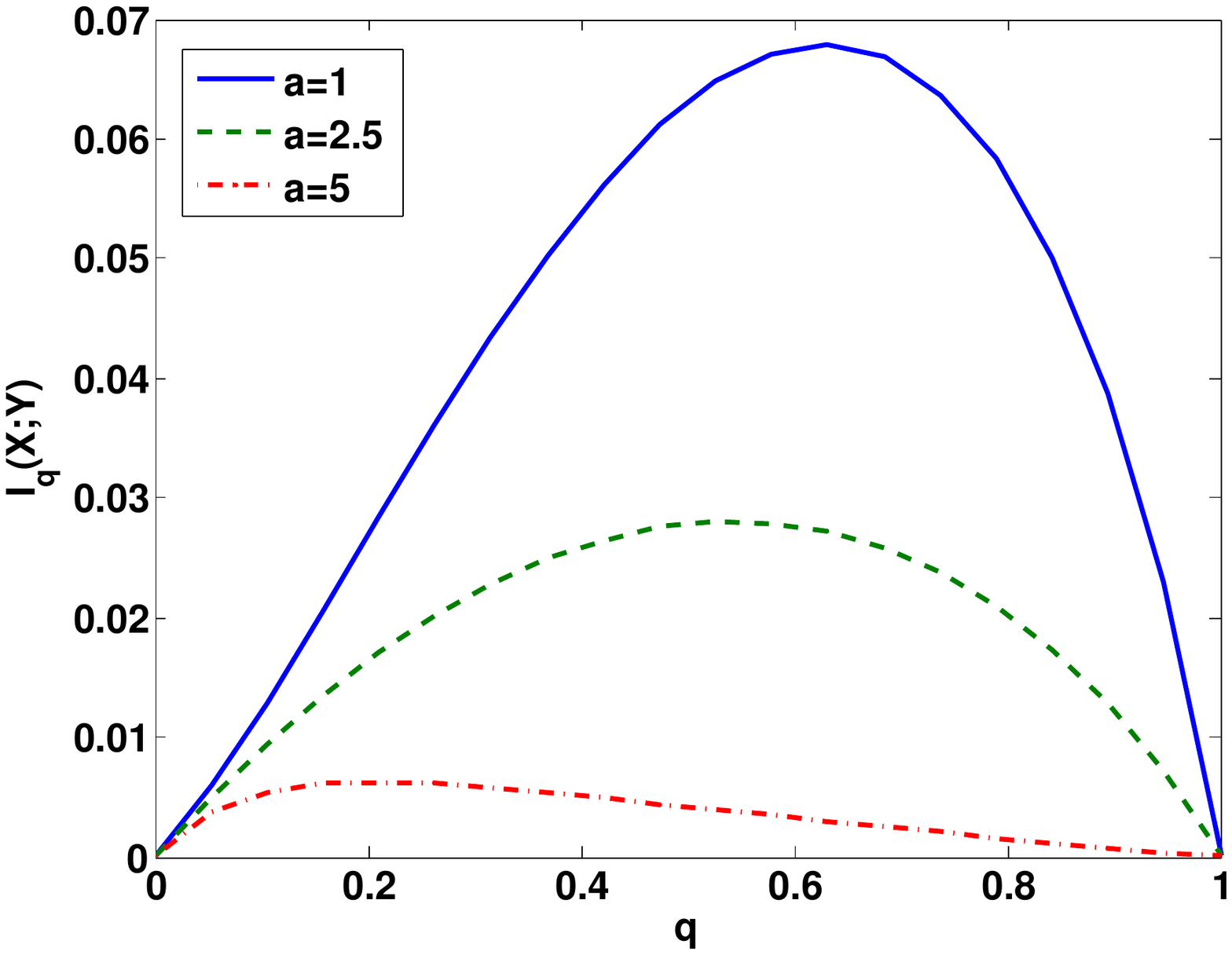}
 \caption{$I_{XY}(q;q)$ as a function of $q$ for Example \ref{Ex:Gaussian} with $\mu=0.1$, $b=5$ and various values of $a$}\label{Fig:GauusianIq}
\end{subfigure}
\caption{ Behavior of $I_{XY}(q;q)$}
\end{figure}

To build intuition, let us first consider a simple noiseless model, where we see the binary noiseless query result as the query output. We divide the unit interval into $M=1/\delta$ small equi-sized ``sensors'' and use a binary query matrix (codebook) with $M$ rows and $N$ columns, where $N$ will be the total number of queries we will execute. Each row in the codebook represents a specific sensor and each column represents a specific query. At time $n$, the query region $S_n$ is the union of all the sensors that have '1' in the $n$-th column. This is a non-adaptive search since the matrix is fixed in advance (does not depend on $Y^n$). The simple noiseless model is depicted in Fig. \ref{Fig:Clean}. The query result is '1' whenever the sensor covering the target location is activated, and '0' otherwise. Therefore, in the clean model we see at the output the exact row that corresponds to the correct sensor. The decoder will make an error in this case only if there is another row, belonging to another sensor, with exactly the same binary values as the correct row. In order to keep all rows different, $N=\lceil \log M \rceil$ is enough. Such $N$ corresponds to a maximal targeting rate of $R=1$, which is the best we could hope for. Note that the well known binary ``divide-and-conquer'' adaptive search algorithm, which reduces the query region by half at each stage, achieves the same $N=\lceil \log M \rceil$ to produce an accurate estimate. 

Moving on to noisy settings, let us start by adding a query independent channel between the query result and the output (see Fig. \ref{Fig:IndNoise}). As before, the query result is the codeword pertaining to the correct sensor. The decoder in this setting faces the exact same problem as a decoder in a standard communication setting, where it needs to decide on the correct transmitted codeword given the channel output. The maximal targeting rate will therefore be achieved by a codebook that achieves the capacity of the binary input channel $P(y|x)$. In our example, where the channel is a BSC(p), the maximal rate is given by (formally proved later) the well known capacity of the BSC(p), $1-h_2(p)$. This same result was obtained in \cite{burnashev1974interval} with an adaptive strategy. 

In the last two examples, adaptive and non adaptive search strategies achieve the same maximal targeting rate. Given the correspondence seen in these examples to the problem of single user channel coding, this is not surprising since it is well known that feedback does not increase the capacity of a single user channel. As seen in the proof of the converse to Theorem \ref{thrm:non-adapt}, adaptive schemes cannot outperform their non-adaptive counterparts in terms of rate as long as the channel does not depend on the query size. In both the above examples, when the channel is a BSC, the optimal codebook will have (approximately) the same number of 1's and 0's in each row and column (as the capacity achieving prior is $\textrm{Ber}(1/2)$), which in turn implies that the optimal query size in each query is $1/2$.

When the observation channel depends on the query size, the search problem is no longer equivalent to any classic channel coding problem. This happens since the channel is no longer fixed, but rather time-varying and determined by the \textit{choice of the codebook}, as the latter determines the query sizes. This dependence is depicted in Fig. \ref{Fig:DepNoise}. As seen in Fig. \ref{Fig:LinIpq}, when the channel depends on the query size, the optimal query size is far from $1/2$. In Example \ref{Ex:BSC}, the optimal solution backs off from the uniform distribution (which would be optimal for any BSC) in order to communicate through a better channel, albeit with a suboptimal input distribution.

\begin{figure}[h!]
\centering 
\begin{subfigure}[b]{0.7\textwidth}
 \includegraphics[width=\textwidth, height=0.15\textheight]{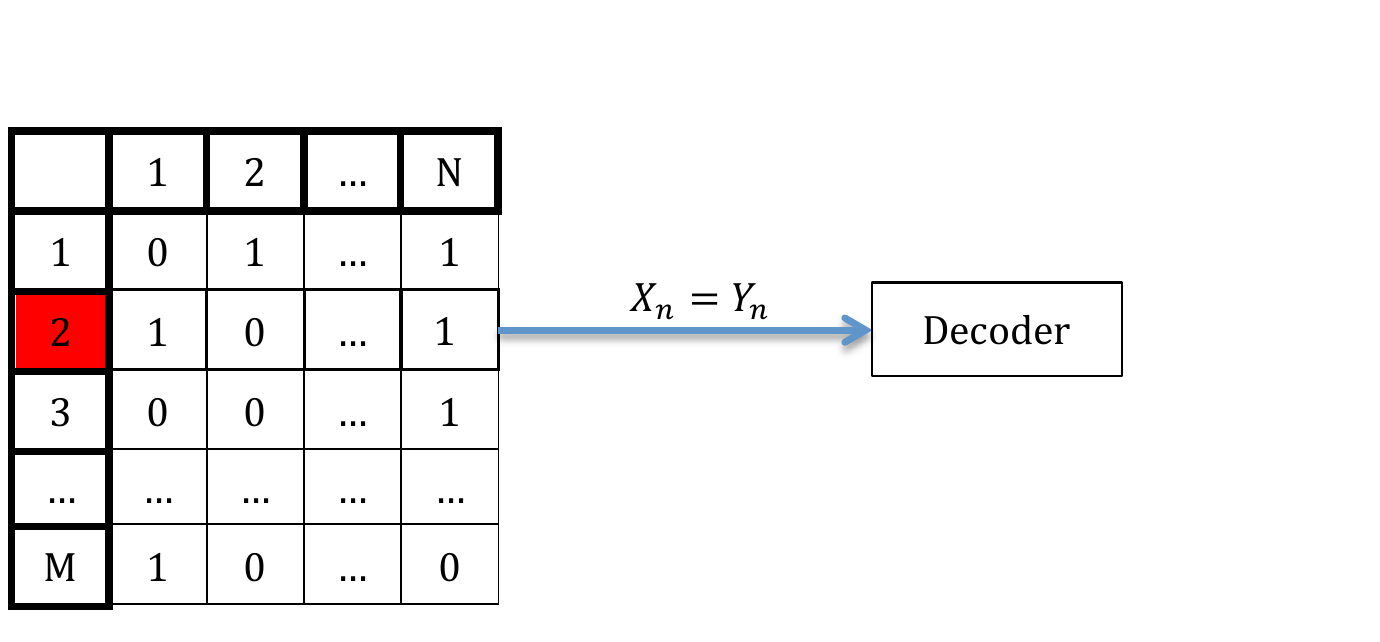}
 \caption{Querying with a clean channel}\label{Fig:Clean}
\end{subfigure}\\
\begin{subfigure}[b]{0.7\textwidth}
 \includegraphics[width=\textwidth, height=0.15\textheight]{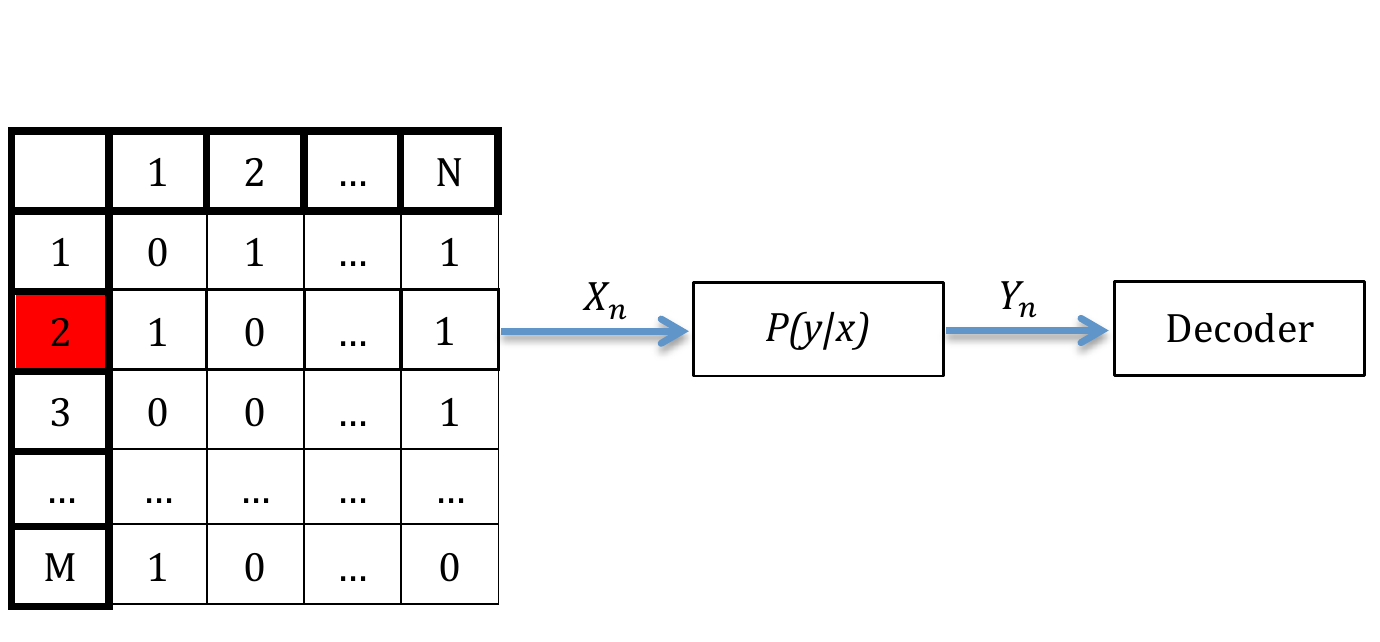}
 \caption{Querying through a measurement independent channel}\label{Fig:IndNoise}
\end{subfigure}\\
\begin{subfigure}[b]{0.7\textwidth}
 \includegraphics[width=\textwidth, height=0.15\textheight]{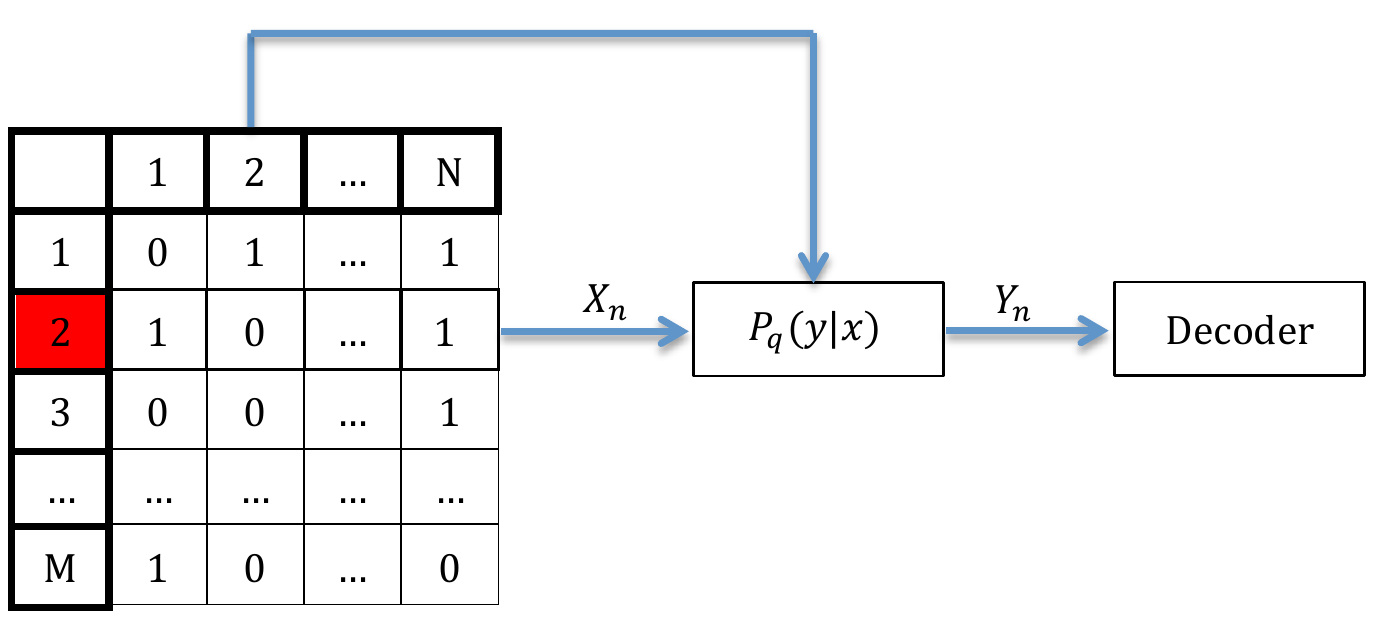}
 \caption{querying through a measurement dependent channel}\label{Fig:DepNoise}
\end{subfigure}
\caption{}
\end{figure}

\subsection{Proof of Achievability}\label{Sec:Direct}
Achievability is obtained via random coding using the query scheme described in the previous subsection. 
Formally, let $q^*$ denote the input distribution that achieves the supremum in \eqref{eq:rate_non_adaptive}. We partition the unit interval into $M = \frac{1}{\delta}$ equi-sized subintervals $\{b_m\}$. Set $R = \frac {\log M} N$.
%
%
We now draw a codebook with $M$ rows, where each row $\bx_m$ has $N$ bits. The codebook is drawn i.i.d. $\textrm{Bern}(q^*)$. We define our random query set $S_n$ according to the codebook's columns: \footnote{The summation and modulo operations are taken as operations on sets, in the usual Minkowski sense.}  
\begin{align*}
  S_n \dfn \rbr{A + \bigcup_{m:x_{m,n}=1}b_m} \mod 1 
\end{align*}
where $A\sim\textrm{Unif}([0,1))$ is a random ``dither'' signal, independent of the codebook. This dithering procedure renders our setting equivalent to the setup where the initial position is uniform and independent, and where the query sets are given by $A=0$. Thus, without loss of generality, we proceed under this latter setup. 

Unlike the standard analysis of random coding \cite{cover}, here, the channel depends on the codebook through the size of the query region. Therefore, since $S_n$ is a random variable, so is the choice of the channel. The following lemma states that for a random codebook, and $N$ large enough, the channels used throughout the $N$ queries will be arbitrarily close to $P_{q^*}(y|x)$ with probability that converges to unity double exponentially fast. 
  
\begin{lemma}\label{Lem:DEFvanish}
Let $\m{A}$ be the event where $||S_n|-q^*|\leq \epsilon$ for all $n$. Then for any $\epsilon > 0$, 
\begin{align}
\Pr(\m{A}^c) \leq N2^{-2^{NRD(q^*+\epsilon||q^*)}},  
\end{align}
where $D(q^*+\epsilon||q^*)=(q^*+\epsilon)\log\frac{q^*+\epsilon}{q^*} + (1-(q^*+\epsilon))\log\frac{1-(q^*+\epsilon)}{1-q^*}$.
\end{lemma}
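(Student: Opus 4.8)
The plan is to control the random column sums $|S_n| = \sum_{m=1}^{M} x_{m,n}$ (recall $A=0$ without loss of generality, and $M = 1/\delta$), each of which is a sum of $M$ i.i.d. $\textrm{Bern}(q^*)$ random variables scaled by $\delta = 1/M$, so that $\mb{E}|S_n| = q^*$. The event $\m{A}^c$ is the union over the $N$ columns of the deviation events $\{ \,||S_n| - q^*| > \epsilon \,\}$, so by the union bound it suffices to bound $\Pr(||S_1| - q^*| > \epsilon)$ by $2^{-2^{NRD(q^*+\epsilon\|q^*)}}$ and multiply by $N$. The key arithmetic identity driving the double exponent is $M = 2^{\log M} = 2^{NR}$, since $R = (\log M)/N$; this is exactly where the inner exponent $NR$ will come from once we have a Chernoff bound whose exponent is linear in $M$.

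First I would apply the Chernoff–Hoeffding bound for sums of i.i.d. Bernoulli variables in its sharp (relative-entropy) form: for $S = \delta \sum_{m=1}^M B_m$ with $B_m \sim \textrm{Bern}(q^*)$, one has $\Pr(S \ge q^* + \epsilon) \le 2^{-M D(q^*+\epsilon\|q^*)}$ and symmetrically $\Pr(S \le q^* - \epsilon) \le 2^{-M D(q^*-\epsilon\|q^*)}$. Since $D(q^*-\epsilon\|q^*) \ge D(q^*+\epsilon\|q^*)$ is not generally true, I would simply note that both one-sided bounds hold and that, for the stated result, it suffices to use the upper-tail bound $2^{-MD(q^*+\epsilon\|q^*)}$ as the dominant term (the lower tail can be absorbed, or one takes the min of the two exponents; the paper's statement keeps only $D(q^*+\epsilon\|q^*)$, presumably treating $\epsilon$ small enough that this is the binding direction, or implicitly bounding the two-sided probability by twice the larger term and folding the factor of $2$ into $N$). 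Then substitute $M = 2^{NR}$ to get $\Pr(||S_1|-q^*| > \epsilon) \le 2^{-2^{NR} D(q^*+\epsilon\|q^*)} = 2^{-2^{NRD(q^*+\epsilon\|q^*)}}$, where the last equality is just $c \cdot 2^{a} = 2^{a + \log c}$ rearranged — actually more carefully $2^{-2^{NR}\cdot d} = 2^{-2^{NR + \log d}}$, and since $D(q^*+\epsilon\|q^*) > 0$ is a constant, $\log d$ is a constant that can be absorbed into the exponent for $N$ large (or one simply states the bound with the cleaner exponent $2^{NRD}$, which is what appears). Finally, the union bound over the $N$ columns yields $\Pr(\m{A}^c) \le N \cdot 2^{-2^{NRD(q^*+\epsilon\|q^*)}}$.

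The only genuinely delicate point is the passage from $2^{-2^{NR}\,D(q^*+\epsilon\|q^*)}$ to the claimed $2^{-2^{NRD(q^*+\epsilon\|q^*)}}$: these are not literally equal, since moving a positive constant $D = D(q^*+\epsilon\|q^*)$ from a multiplicative position into the exponent gives $2^{NR}\cdot D = 2^{NR + \log D}$, not $2^{NRD}$. I expect the resolution is that for $N$ sufficiently large $NR + \log D \ge NRD$ whenever $D \le 1$ is close enough to... no — rather, one should read the lemma as claiming the bound $N2^{-2^{NRD}}$ holds for all $N$ large enough, and indeed for large $N$ we have $2^{NR}\cdot D \ge 2^{NRD}$ iff $NR + \log D \ge NRD$ iff $NR(1-D) \ge -\log D$, which holds for all large $N$ since $D < 1$ implies $1 - D > 0$ and the left side grows linearly. (If $D \ge 1$ the inequality $2^{NR}D \ge 2^{NRD}$ can fail, but then the bound with exponent $2^{NRD}$ is the weaker, still-true statement after possibly shrinking $\epsilon$ so that $D<1$; since $D(q^*+\epsilon\|q^*)\to 0$ as $\epsilon\to 0$, we may always assume $D<1$.) So the main obstacle is purely bookkeeping: stating precisely for which regime of $N$ and $\epsilon$ the clean form of the bound is valid, and observing that this regime is all that is needed downstream, where Lemma~\ref{Lem:DEFvanish} is invoked only to show $\Pr(\m{A}^c)\to 0$ (indeed doubly exponentially) as $N\to\infty$ for fixed $\epsilon$. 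I would therefore present the Chernoff step and the substitution $M=2^{NR}$ as the substance, and dispatch the constant-in-the-exponent issue with a one-line remark that $D(q^*+\epsilon\|q^*)<1$ without loss of generality and that the displayed inequality then holds for all $N$ large enough.
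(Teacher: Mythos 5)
Your proof follows essentially the same route as the paper's own (very terse) argument: write $|S_n|$ as a normalized sum of $M=2^{NR}$ i.i.d.\ $\mathrm{Bern}(q^*)$ indicators, apply the Chernoff bound in its relative-entropy form to each column, and union-bound over the $N$ columns. Your added remarks on the placement of $D(q^*+\epsilon\|q^*)$ inside the double exponent (the Chernoff bound really gives $2^{-2^{NR}D}$, not $2^{-2^{NRD}}$) and on the asymmetry of the two one-sided tails are fair corrections of imprecision in the lemma as stated, not a different approach, and your resolution (assume $D<1$, take $N$ large) is sound and harmless for how the lemma is used downstream.
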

\begin{proof}
Note that $|S_n| =2^{-NR} \sum_{i=1}^{2^{NR}} \mathds{1}(X_{i,n}=1)$. For the $n$th column, $\Pr(||S_n|-q^*|\geq \epsilon)\leq 2^{-2^{NR D(q^*+\epsilon||q^*)}}$ follows directly from Chernoff's bound \cite{cover}. Then, the union bound is applied to accommodate all $N$ columns.
\end{proof}
 
\begin{remark}\label{remark1}
Under the event $\m{A}$ in Lemma \ref{Lem:DEFvanish}, and using the continuity of the observation channel in $q$, the probability that the measurements are observed through a channel which is not arbitrarily close to $p_{q^*}(y|x)$ vanishes double exponentially fast as $N\to\infty$. Therefore, in the derivation below we will assume that the observation channel is $p_{q^*}(y|x)$. While there could be a mismatch between the decoder which decodes according to $p_{q^*}(y|x)$ and the actual channel, this mismatch is arbitrarily small. The continuity of the channel and hence the error exponent in $q$, along with Lemma \ref{Lem:DEFvanish} assures us that the error exponent which pertains to the mismatched decoder and the actual channel is arbitrarily close to the one we derive assuming that the observation channel is $p_{q^*}(y|x)$. 
\end{remark}

After $N$ queries, we find the codeword that has the highest likelihood under the assumption that the measurements are observed through $P_{q^*}(y|x)$. We now analyze the probability of error. This analysis follows Gallager's random error exponent analysis \cite{GallagerBook} with the distinction that under the conditioning on $\m{A}$, the codewords are not independent.
 
We write the average probability of error as 
\begin{align}
\overline{P}_e = \Pr(\m{A})\Pr(e|\m{A}) + \Pr(\m{A}^c)\Pr(e|\m{A}^c)\label{eq:ExpInit0}.  
\end{align}
The second term vanishes double exponentially fast. For the first term we have
\begin{align}
 	\Pr(e|\m{A})=\int_{\by}d\by\sum_{\bx_k}\Pr(\bx_k|\m{A})P_{\m{A}}(\by|\bx_k)\Pr(e|\bx_k,\by,\m{A}), \label{eq:ExpInit}
\end{align}
where $\by$ are the observations and $P_{\m{A}}(\by|\bx_m)$ is the channel (probability density function) induced by the event $\m{A}$. In what follows, as explained in Remark \ref{remark1}, we assume that this channel is $P_{q^*}(\by|\bx)$. 

Let $\m{E}_{k'}$ denote the event that the  codeword $\bx_{k'}$ is chosen instead of $\bx_k$. 
Following \cite{GallagerBook} we write for $0\leq \rho \leq 1$
\begin{align}\label{eq:Tk}
 	\Pr(e|\bx_k,\by,\m{A}) \leq \rbr{\sum_{k'\neq k}\Pr(\m{E}_{k'}|\m{A})}^{\rho}
\end{align}
and 
\begin{align}
 	\Pr(\m{E}_{k'}|\m{A}) = \sum_{\bx_{k'}: P_{\m{A}}(\by|\bx_k)\leq  P_{\m{A}}(\by|\bx_{k'})} \Pr(\bx_{k'}|\bx_k,\m{A})\label{eq:CondErr}
\end{align}
Note that unlike \cite[eq. 5.6.8]{GallagerBook}, we cannot assume the codewords are independent under event $\m{A}$. To overcome this, we note that by simple application of the law of total probability we have:
\begin{align}
	\Pr(\bx_k, \bx_{k'}|\m{A}))&\leq \frac {\Pr(\bx_k,\bx_{k'})}{1-\Pr(\m{A}^c)}  = \frac {Q(\bx_k)Q(\bx_{k'})}{1-\Pr(\m{A}^c)} \label{eq:TotProb1}\\
	\Pr(\bx_{k}|\m{A}) &\geq Q(\bx_{k}) - \Pr(\m{A}^c) \label{eq:TotProb2}
\end{align}
where $Q(\cdot)$ denotes the random coding prior. Using \eqref{eq:TotProb1} and \eqref{eq:TotProb2} along with the Bayes rule we have
\begin{align}
	\Pr(\bx_{k'}|\bx_k,\m{A}) &= \frac{\Pr(\bx_{k'},\bx_k|\m{A})}{\Pr(\bx_{k}|\m{A})}\nl
	&\leq \frac {Q(\bx_k)Q(\bx_{k'})} {(1-\Pr(\m{A}^c))(Q(\bx_{k}) - \Pr(\m{A}^c))}\nl
	&\leq \frac {Q(\bx_k)Q(\bx_{k'})} {Q(\bx_{k})(1-\Pr(\m{A}^c))(1 - \Pr(\m{A}^c)/Q(\bx_{k}))}\nl
	&\leq \frac {Q(\bx_{k'})} {(1-\Pr(\m{A}^c)/Q_{min}^N)^2}\nl
	&= Q(\bx_{k'})(1+\epsilon') \label{eq:CondisUncond}
\end{align}
where $Q_{min}$ denotes the probability of the least probable binary symbol under Q. 
Note that $Pr(\m{A}^c)/Q_{min}^N$ vanishes double exponentially fast due to Lemma \ref{Lem:DEFvanish} and therefore $\epsilon'$ in \eqref{eq:CondisUncond} also vanishes double exponentially fast. We conclude that while the conditioning on $\m{A}$ introduces dependencies between the codewords, since we condition on a sufficiently high probability event, the probability of the conditioned event will become arbitrarily close to the probability of the unconditioned event. 

After substituting \eqref{eq:CondisUncond} in \eqref{eq:CondErr} and \eqref{eq:Tk} we can follow Gallager's derivation of the random error exponent \cite{GallagerBook} to arrive at
\begin{align}
	\Pr(e|\bx_k,\by,\m{A}) \leq \sbr{(M-1)\sum_{\bx}Q(\bx)(1+\epsilon')\frac {P_{q^*}(\by|\bx)^{\frac 1 {1+\rho}}} {P_{q^*}(\by|\bx_k)^{\frac 1 {1+\rho}}} }^{\rho} \label{eq:ExpEgxyA}
\end{align}
Substituting \eqref{eq:ExpEgxyA} into \eqref{eq:ExpInit} we arrive at
\begin{align}
	\Pr(e|\m{A})\leq  M^{\rho}\int_{\by}d\by\sum_{\bx_k}Q(\bx_k)(1+\epsilon')P_{q^*}(\by|\bx_k)^{\frac 1 {1+\rho}}\sbr{\sum_{\bx}Q(\bx)(1+\epsilon')P_{q^*}(\by|\bx)^{\frac 1 {1+\rho}}  }^{\rho}
\end{align}
Now, we can harness the fact $\epsilon'$ vanishes double exponentially fast and therefore does not affect the error exponent, hence 
\begin{align}
	\Pr(e|\m{A})\stackrel{\cdot}{\leq}  M^{\rho}\int_{\by}d\by\sum_{\bx_k}Q(\bx_k)P_{q^*}(\by|\bx_k)^{\frac 1 {1+\rho}}\sbr{\sum_{\bx}Q(\bx)P_{q^*}(\by|\bx)^{\frac 1 {1+\rho}}  }^{\rho}
\end{align}
From this point on, since the channel is memoryless and the codewords under $Q(\cdot)$ are drawn i.i.d. we can follow \cite[5.6.11-5.6.12]{GallagerBook} verbatim to arrive at
\begin{align}
	\Pr(e|\m{A})&\stackrel{\cdot}{\leq}  2^{-N(E_0(\rho,q^*) - \rho R)},\nl
	E_0(\rho,q^*)&= -\log\int_{y}dy\cbr{q^*P_{q^*}(y|x=1)^{\frac 1 {1+\rho}}+(1-q^*)P_{q^*}(y|x=0)^{\frac 1 {1+\rho}}}^{1+\rho}
\end{align}
The exponent is positive as long as $R\leq I_{XY}(q^*,q^*)$. Since the second addend in \eqref{eq:ExpInit0} vanishes double exponentially fast, this concludes the achievability part of the proof of Theorem \ref{thrm:non-adapt}.

\subsection{Proof of Converse}
Denote the fixed stopping time by $\tau = N$. Let  $\{S_n\}_{n=1}^N$ be any non-adaptive strategy achieving an error probability $\eps$ with search resolution $\delta$. We prove the converse holds even under the less stringent requirement where the initial position is uniformly distributed $W\sim\textrm{Unif}[0,1)$. The idea of the converse proof is to discretize the problem, and then follow the standard channel coding converse.

Partition the unit interval into $\lceil\beta/\delta\rceil$ equi-sized intervals for some constant $\beta\in(0,\tfrac{1}{2})$, and let $W'$ be the index of the interval containing $W$. Any scheme $\{S_n\}$ that returns $W$ with resolution $\delta$ and error probability $\epsilon$, can be made to return $W'$ with an error probability at most $\eps'\dfn \eps + 2\beta$, where the latter addend stems from the probability that $W$ is close to a boundary point of the grid, and $\wh{W}$ is estimated to be in the adjacent (wrong) bin. 

When the target is uniformly distributed on the unit interval, for any given search region, the probability that we hit the target depends only on the size of the region. Therefore, we have that $X_n\sim\mathrm{Bern}(q_n)$ where $q_n\dfn |S_n|$ and that $Y_n$ is obtained from $X_n$ through a memoryless time-varying channel $P_{q_n}(y|x)$. Since the scheme is non-adaptive and fixed in advance, the channels $\cbr{P_{q_n}(y|x)}_{n=1}^N$, while time varying, are known in advance, and given the non adaptive search strategy, do not depend on past outputs $Y^{n-1}$. Following the steps of the converse to the channel coding theorem, we have
\begin{align}\label{eq:converse}
\log\left\lceil \frac{\beta}{\delta}\right\rceil &= H(W') \notag\\ 
&= I(W';Y^N) + H(W'|Y^N) \notag\\ 
&\stackrel{(\textrm{a})}{\leq} I(W';Y^N) + N\eps' \notag\\ 
&= \sum_{n=1}^N I(W';Y_n|Y^{n-1}) + N\eps' \notag\\ 
&\leq \sum_{n=1}^N I(W',Y^{n-1};Y_n) + N\eps' \notag\\ 
&\stackrel{(\textrm{b})}{\leq}  \sum_{n=1}^N I(W',W,Y^{n-1};Y_n) + N\eps' \notag\\ 
&\stackrel{(\textrm{c})}{=}  \sum_{n=1}^N I(W',W, X_n, Y^{n-1};Y_n) + N\eps' \notag\\ 
&\stackrel{(\textrm{d})}{=} \sum_{n=1}^N I(X_n;Y_n) + N\eps' \notag\\ 
&\stackrel{(\textrm{e})}{=} \sum_{n=1}^N I_{XY}(q_n;q_n) + N\eps' \notag\\ 
&\leq N\sup_q I_{XY}(;qq) + N\eps', 
\end{align}
where (a) is by virtue of Fano's inequality, (b) follows since conditioning reduced entropy, (c) follows since $X_n$ is a function of $W$, (d) follows since the search scheme does not depend on $Y^{n-1}$ and the channel from $X_n$ to $Y_n$, given the scheme, is memoryless. Finally, (e) follows from the definition of $I_{XY}(p;q)$ and noticing that  the size of the search region governs the input distribution. Dividing by $N$ we obtain 
\begin{IEEEeqnarray*}{rCl}
R &\leq&  \sup_q I_{XY}(q;q)  -\frac{\log\beta}{N}+ \eps + 2\beta.
\end{IEEEeqnarray*}
Noting that the inequality above holds for any $\beta\in(0,\tfrac 1 2)$, the converse now follows by taking the limit $N\to\infty$, then $\beta\to 0$ and then requiring $\eps\to 0$. 

\begin{remark}
Note that here, in contrast to the standard memoryless channel coding setup where the channel noise is strategy independent, (d) above does not generally hold when an adaptive strategy (i.e., feedback) is employed; this stems from the fact that in this case, the channel itself would generally depend on $Y^{n-1}$ (since $|S_n|$ is determined by it), and therefore $Y^{n-1}\to X_n\to Y_n$ would not form a Markov chain. This means that when adaptive strategies are considered and the channel $P_{q}(y|x)$ is not constant in $q$, the above converse will not hold. As will be seen in Section \ref{Sec:STAdaptive}, indeed, the targeting rates that are attainable with adaptive schemes can significantly exceed their non-adaptive counterparts. However, if $P_{q}(y|x)$ does not depend on $q$, the above converse proof holds for adaptive schemes as well, meaning that in that case there is no gain in terms of rate from adaptive schemes (as is well known).  
\end{remark}

\section{Stationary Target - Adaptive strategies}\label{Sec:STAdaptive}
In this section, we consider the gain that can be reaped by allowing the search decisions to be made adaptively. Here, the duration of search $\tau$ will generally be a random stopping time dependent on the measurements sample path. Moreover, the choice of the probing regions $S_n$, for $n$ up to the horizon $\tau$, can now depend on past measurements. We characterize the gain in terms of the maximal targeting rate, and the targeting rate-reliability tradeoff. While the most general adaptive schemes can be fully sequential (as for example in the classical case of binary search over noiseless channels), we will see that adaptive schemes with very few phases, where each phase employs a non-adaptive strategy but depends on the outcome of the previous phase (hence overall adaptive) are enough to achieve the optimal rate-reliability tradeoff. 
Furthermore, we will show that adaptivity allows us to achieve the maximal possible rate and reliability, i.e., those associated with the best observation channel in our model. 

In the classical channel coding setup it is well known that feedback, while not affecting the capacity, can significantly improve the targeting reliability (error exponent). Following this, in Section \ref{Sec:Validation} we present simple adaptations of two well known schemes to our setting, that increase the targeting reliability beyond the sphere-packing bound. In Section \ref{Sec:2PhaseAndValidation} we then proceed to harness the fact that for measurement dependent observation channels, using feedback does not only improves the targeting reliability, but also increases the maximal targeting rate. Note again that this fact does not contradict Shannon's famous result \cite{Shannon56} that feedback does not increase capacity in a (classical) point to point channel. 

\subsection{Non Adaptive Search with Validation}\label{Sec:Validation}
As a first attempt at an adaptive strategy, we continue with the non-adaptive search from the previous section, but allow the agent to validate the outcome of the search phase. If the validation fails, the agent restarts the search and discards all past data. We call such a validation failure an \textit{erasure} event. Therefore, if the probability of erasure is given by $\Pr(\m{E})$ and $N$ queries were made until either the location is estimated or erasure is declared, the average number of queries will be given by 
\begin{align}
	\bE[\tau] = N\rbr{1+\Pr(\m{E})+\Pr(\m{E})^2+\Pr(\m{E})^3 \ldots} = \frac N {1-\Pr(\m{E})}, 
\end{align}
incurring a rate penalty of $1-\Pr(\m{E})$ compared to the non adaptive scheme. Therefore, as long as $\Pr(\m{E})$ is kept small, there is effectively no penalty in rate. However, given that erasure was not declared, the probability of error can be greatly improved compared to the non-adaptive scheme.   
We will consider two validation schemes, due to Forney  \cite{Forney1968} and Yamamoto-Itoh \cite{YamaItoh1980}. 

In \cite{Forney1968}, Forney considered a communication system in which a decoder, at the end of the transmission, can signal the encoder to either repeat the message or continue to the next one. Namely, it is assumed that a one bit ``decision feedback'' can be sent back to the transmitter at the end of each message block. This is achieved by adding an erasure option to the decision regions, that allows the decoder/agent to request a ``retransmission'' if uncertainty is too high, i.e., to restart the exact same coding process from scratch. In our search problem, this means that if the likelyhood of one of the codewords is not high enough compared to all other codewords, erasure is declared and the search is restarted from scratch.
\begin{proposition}\label{prop:Forney}
Let $P_q(y|x)$ be the channel between the query result and the query output for a query region of size $q$. The targeting rate-reliability tradeoff for non-adaptive scheme with Forney's decision-feedback validation is given by 
\begin{align*}
	\nonumber E =  \max_{\rho\geq 1}E_0(R,\rho)-\rho R,
\end{align*}
with	
\begin{align*}
	E_0(\rho,R) = \int_y dy\sum_{x=0}^1 Q^*(x)P_{q^*}(y|x)\sbr{\log P_{q^*}(y|x) - \log\rbr{\sum_{x'=0}^1Q^*(x')P_{q^*}(y|x')^{\frac 1 {\rho}}}^{\rho}}
\end{align*}
where $q^*$ is the maximizer in \eqref{eq:rate_non_adaptive} and $Q^*(x=1) = q^*$.
\end{proposition}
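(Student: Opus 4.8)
The plan is to mimic Forney's classical analysis of decoding with an erasure option, adapting it to the measurement‑dependent setting exactly as the achievability proof of Theorem \ref{thrm:non-adapt} adapted Gallager's random coding bound. First I would set up the same random codebook as in Section \ref{Sec:Direct}: partition the unit interval into $M=1/\delta$ bins, draw the $M\times N$ query matrix i.i.d.\ $\textrm{Bern}(q^*)$, apply the dither $A$, and condition on the high‑probability event $\m{A}=\{\,||S_n|-q^*|\le\epsilon\ \forall n\,\}$. By Lemma \ref{Lem:DEFvanish} the complement of $\m{A}$ contributes a doubly‑exponentially small term to every probability below, and by Remark \ref{remark1} we may pretend the channel is exactly $P_{q^*}(y|x)$ and that the codewords are i.i.d.\ (the mutual dependence introduced by conditioning on $\m{A}$ inflates probabilities by a factor $1+\epsilon'$ with $\epsilon'$ vanishing doubly exponentially, just as in \eqref{eq:CondisUncond}).

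Next I would introduce Forney's decision rule: after $N$ queries, accept codeword $\bx_m$ if its likelihood dominates the sum of all others' likelihoods by a threshold, i.e.\ declare $\widehat W=m$ iff $P_{q^*}(\by|\bx_m)\ge e^{NT}\sum_{m'\ne m}P_{q^*}(\by|\bx_{m'})$ for a threshold $T$; otherwise declare erasure $\m{E}$. There are then two error events to bound: the undetected error probability $P_{e}$ (accepting a wrong codeword) and the erasure probability $\Pr(\m{E})$. Following Forney, I would bound $P_e$ by a Gallager‑style argument: write $\Pr(\text{accept }\bx_{m'}\ne\bx_m\mid \bx_m,\by)$, use the threshold test and Markov‑type inequalities to pull in the $e^{NT}$ factor, average over the random codebook and over $\by$, and exploit memorylessness to single‑letterize, arriving at $P_e\ \dot\le\ 2^{-N(E_0(\rho,R)-\rho R - \rho T)}$ valid for all $\rho\ge 1$ (the range $\rho\ge1$ is what distinguishes the erasure exponent from the ordinary $\rho\in(0,1)$ random coding exponent). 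Separately, $\Pr(\m{E})$ (more precisely the probability of erasure‑or‑error) is bounded by the ordinary random coding exponent and stays bounded away from $1$ — actually it can be driven small — so the rate penalty $1/(1-\Pr(\m{E}))$ is negligible and the effective targeting rate stays at $R$. Choosing $T\to 0$ (or optimizing $T$ and noting its contribution is negligible in the relevant regime), and converting the block‑length statement $\tau=N=\log(1/\delta)/R$, $\log(1/\eps)\ge \frac{E}{R}\log(1/\delta)$ with $E=\max_{\rho\ge1}E_0(\rho,R)-\rho R$, yields the claimed tradeoff; the $E_0$ in the statement is exactly the Forney form $\int dy\sum_x Q^*(x)P_{q^*}(y|x)\bigl[\log P_{q^*}(y|x)-\log(\sum_{x'}Q^*(x')P_{q^*}(y|x')^{1/\rho})^{\rho}\bigr]$ once one writes the Gallager $E_0$ with the substitution $s=1/(1+\rho)$ appropriate to the erasure regime.

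The main obstacle, and the only place where genuine care beyond "cite Forney" is needed, is the same one that arose in the proof of Theorem \ref{thrm:non-adapt}: the codewords are \emph{not} independent once we condition on $\m{A}$, and Forney's original bounds — like Gallager's — are derived assuming independence. I would handle this exactly as in \eqref{eq:TotProb1}--\eqref{eq:CondisUncond}: bound the conditional pair‑ and single‑codeword distributions by $(1+\epsilon')$ times the product of the i.i.d.\ priors, with $\epsilon'=\bigl(1-\Pr(\m{A}^c)/Q_{\min}^N\bigr)^{-2}-1$ vanishing doubly exponentially by Lemma \ref{Lem:DEFvanish}; since $\epsilon'$ decays faster than any single exponential it does not perturb the error exponent, and a continuity argument (Remark \ref{remark1}) absorbs the mismatch between the decoder's assumed channel $P_{q^*}$ and the actual near‑$P_{q^*}$ channel. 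A secondary, more bookkeeping‑type point is to confirm that the erasure probability is small enough that restarting costs no rate — this follows because, under $\m{A}$, $\Pr(\m{E})$ is upper bounded by the ordinary random coding error probability which $\to 0$ for $R$ strictly below the maximal targeting rate, so $1-\Pr(\m{E})\to 1$. Modulo these adaptations, the argument is Forney's \cite{Forney1968} verbatim, and I would present it as such.
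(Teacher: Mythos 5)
Your proposal follows essentially the same route as the paper: the paper's own ``proof'' of this proposition is only a sketch that sets up Forney's threshold test $P_{q^*}(\by|\bx_k)\ge 2^{NT}\sum_{k'\ne k}P_{q^*}(\by|\bx_{k'})$, invokes Forney's analysis for the undetected-error exponent, and notes that the erasure probability can be kept small so the repeat-on-erasure rate penalty $1/(1-\Pr(\m{E}))$ is negligible. Your additional care in re-importing the conditioning on $\m{A}$ and the $(1+\epsilon')$ codeword-dependence bound from the achievability proof of Theorem \ref{thrm:non-adapt} is exactly the adaptation the paper implicitly relies on, so the argument is correct and matches the intended proof.
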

The analysis and proof of this proposition is given in \cite[p. 213]{Forney1968} and therefore we only sketch the idea here. 
Given $\by$, a codeword $k$ will be declared as the output if $\frac {P_{q^*}(\by|\bx_k)} {\sum_{k'\neq k}P_{q^*}(\by|\bx_k')}\geq 2^{NT}$, where $T>0$ governs the tradeoff between the probability of error and the probability of erasure.  
Let $\varepsilon_1$ denote the event where $\by$ does not fall within the decision region of the correct codeword and let $\varepsilon_2$ denote the event of undetected error. This gives us $\Pr(\m{E}) = \Pr(\varepsilon_1) - \Pr(\varepsilon_2) \leq \Pr(\varepsilon_1)$. Forney showed that the choice of $T$ controls the tradeoff between $\Pr(\varepsilon_1)$ and $\Pr(\varepsilon_2)$ and that we can keep $\Pr(\varepsilon_1)$ arbitrarily small (vanishing with a positive exponent) while achieving the exponential decay rate given by Proposition \ref{prop:Forney} for $\Pr(\varepsilon_2)$.
\\
The second validation scheme we consider was proposed by Yamamoto and Itoh in \cite{YamaItoh1980} in the context of channel coding with clean feedback. Unlike Forney's scheme which requires only one bit of feedback, this scheme requires the decoder to feed back its decision. While perfect feedback is impractical in a communication system, in our model it is inherent (as the encoder and decoder are the same entity) and can be readily harnessed. Let $0\leq \lambda\leq 1$. A block of $N$ queries is divided into two parts as follows. In the first part, of size $N(1-\lambda)$, the non-adaptive scheme of Section \ref{Sec:Direct} is used. After completing the search phase with resolution $\delta$, the agent continues to probe the estimated target location, namely an interval of size $\delta$ with $\lambda N$ queries. If the probed region contains the target, the output of the validation phase should look like a sequence of '1's passing through $P_{\delta}(y|x=1)$, otherwise (the probed region does not contain the target), the output is drawn according to $P_{\delta}(y|x=0)$. Thus, after the probing phase, the agent is faced with a binary hypothesis test with $\lambda N$ i.i.d samples drawn according to $P_{\delta}(y|x=1)$ if the target was hit ($H_1$) and according to $P_{\delta}(y|x=0)$ otherwise ($H_0$). If the agent declares $H_0$ (erasure), the search is restarted from scratch. If the agent declares $H_1$, the target location given by the first phase is declared as the estimated target location. Thus, an error will occur if the agent declares $H_1$ when the target was missed in the first phase. We have the following result for this scheme:
\begin{proposition} \label{prop:YamItoh}
	Let $P_q(y|x)$ be the channel between the query result and the query output for a query region of size $q$. The targeting rate-reliability tradeoff for non-adaptive scheme with a Yamamoto-Itoh validation is given by	
\begin{align}
	E = C_1(0)\rbr{1-\frac R {I_{XY}(q^*;q^*)}}
\end{align}
where $q^*$ is the maximizer in \eqref{eq:rate_non_adaptive}.
\end{proposition}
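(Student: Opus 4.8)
The plan is to analyze the two-phase Yamamoto--Itoh scheme described just above the statement by tracking three things in turn: the rate inflation caused by restarts on erasure, the probabilities of the relevant error/erasure events, and the optimization over the split fraction $\lambda$ and the first-phase rate. \emph{Rate bookkeeping.} Fix a first-phase rate $R_1 < I_{XY}(q^*;q^*)$ and a split $\lambda\in(0,1)$, and run the non-adaptive random-coding scheme of Section~\ref{Sec:Direct} (with input $q^*$, a dither, and Lemma~\ref{Lem:DEFvanish} invoked verbatim to replace the query-dependent channel by $P_{q^*}$) on a block of $N_1=(1-\lambda)N$ queries at rate $R_1$, so that $N_1=\log(1/\delta)/R_1$ and $N=\log(1/\delta)/\rbr{(1-\lambda)R_1}$. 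By the geometric-series identity preceding Proposition~\ref{prop:Forney}, $\mb{E}[\tau]=N/(1-\Pr(\m{E}))$; hence once we show $\Pr(\m{E})\to 0$ the effective targeting rate is $(1-\lambda)R_1$, and equating it with the prescribed rate $R$ forces $\lambda=1-R/R_1$ (for each admissible $R_1$ this gives $\lambda<1-R/I_{XY}(q^*;q^*)$, with equality only in the limit $R_1\uparrow I_{XY}(q^*;q^*)$, which is the source of the boundary in the final formula). In the validation phase the agent probes the single estimated $\delta$-interval, so the query size is deterministically $\delta$ and the observation channel is exactly $P_\delta(y|x)$, with no measurement-dependent randomness to control.

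\emph{Error analysis.} Condition on whether the first phase returns the correct interval. If correct, the $\lambda N$ validation samples are i.i.d.\ $P_\delta(\cdot\,|\,x=1)$; if wrong, then (no matter which wrong interval was chosen) they are i.i.d.\ $P_\delta(\cdot\,|\,x=0)$. The agent runs a Neyman--Pearson test of $H_1$ against $H_0$ and declares erasure iff it decides $H_0$. Choosing the threshold so that the type-I error $\Pr(\text{decide } H_0 \mid H_1)\to 0$, the Chernoff--Stein lemma gives $\Pr(\text{decide } H_1 \mid H_0)\le 2^{-\lambda N\,(C_1(\delta)-o(1))}$, whose exponent is exactly $C_1(\delta)=D\rbr{P_\delta(\cdot|1)\,\|\,P_\delta(\cdot|0)}$. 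The erasure event is contained in (first phase wrong) $\cup$ (first phase correct, decide $H_0$), so $\Pr(\m{E})\le 2^{-N_1 E_r(R_1,q^*)}+o(1)\to 0$ by Theorem~\ref{thrm:non-adapt} and $R_1<I_{XY}(q^*;q^*)$, which validates the rate claim. An undetected error occurs only when the first phase is wrong and the test decides $H_1$; since the terminating round declares $H_1$ with probability tending to one, the renewal structure yields $\eps\le(1+o(1))\,\Pr(\text{decide } H_1 \mid H_0)\le 2^{-\lambda N\,(C_1(\delta)-o(1))}$.

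\emph{Assembly and the main obstacle.} From $\log(1/\delta)=N_1R_1=(1-\lambda)NR_1$ we get $N=\log(1/\delta)/R$, hence $\log(1/\eps)\ge\lambda N\,(C_1(\delta)-o(1))=\tfrac{\lambda\,(C_1(\delta)-o(1))}{R}\log(1/\delta)$. Letting $\delta\to 0$ while simultaneously pushing $R_1\uparrow I_{XY}(q^*;q^*)$ (so $\lambda\to 1-R/I_{XY}(q^*;q^*)$), and using continuity of the observation channel in $q$ so that $C_1(\delta)\to C_1(0)$, gives $\log(1/\eps)\ge\tfrac{E}{R}\log(1/\delta)$ with $E=C_1(0)\rbr{1-\tfrac{R}{I_{XY}(q^*;q^*)}}$, as claimed. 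I expect the one delicate point to be exactly this double limit: $R_1$ must be driven up to $I_{XY}(q^*;q^*)$ slowly enough in $\delta$ that $N_1E_r(R_1,q^*)\to\infty$ (so $\Pr(\m{E})$ stays negligible and the rate claim survives) yet the validation term $\lambda NC_1(\delta)$ still governs $\log(1/\eps)$; apart from this bookkeeping and the appeal to continuity of $C_1$ at $q=0$, the argument is a direct transcription of the Yamamoto--Itoh analysis through the search-to-coding dictionary of Section~\ref{Sec:Direct}.
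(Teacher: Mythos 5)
Your proposal is correct and follows essentially the same route as the paper's proof: the same erasure/undetected-error decomposition, the same appeal to the Chernoff--Stein lemma for the validation test over $P_\delta(y|x)$, the same bookkeeping $\lambda = 1 - R/I_{XY}(q^*;q^*)$ and $\mb{E}[\tau] \le N/(1-\Pr(\m{E}))$, and the same limit $C_1(\delta)\to C_1(0)$. Your explicit handling of the double limit (driving $R_1\uparrow I_{XY}(q^*;q^*)$ slowly in $\delta$) is a minor refinement of a step the paper performs implicitly by running the first phase ``at'' the maximal rate.
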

\textit{Proof}: Let $\varepsilon$ denote an error event. By the Chernoff-Stein lemma \cite[Section 11.8]{cover}, the probability of declaring $H_0$ when $H_1$ is true (false erasure, or ``false alarm'' in the Neyman-Pearson terminology) can be arbitrarily small while attaining 
\begin{align}
	\Pr(\varepsilon)\leq 2^{\lambda ND(P_{\delta}(y|x=1)||P_{\delta}(y|x=0))} = 2^{\lambda N C_1(\delta)}
\end{align}
Letting $\m{E}$ denote the erasure event and $\varepsilon_1$ denote the event of error in the first phase, we have that $\Pr(\m{E})\leq \Pr(\varepsilon_1) + \Pr(H_0|H_1)$, where $\Pr(H_0|H_1)$ denotes the probability that $H_0$ was declared although the target was hit in the first phase. The latter probability can be made arbitrarily small for large enough $N$ by the Chernoff-Stein lemma while the former probability vanishes exponentially fast with $N$ by Theorem \ref{thrm:non-adapt}. Therefore, for any $\epsilon > 0$ there exists $N$ large enough such that $\Pr(\m{E})\leq \epsilon$. Therefore,  $\bE[\tau]\leq \frac N {1-\epsilon}$.
We use the highest possible rate of $I_{XY}(q^*;q^*)=\frac{\log\frac 1 {\delta}}{(1-\lambda)N}$ for the first phase. Therefore, the effective rate of  one round (without taking erasure into account) is given by $R$, we have that $\lambda = 1-\frac R {C(q^*)}$. Using this, the error exponent attained by this scheme is lower bounded by
\begin{align}
	\frac {-\log \Pr(\varepsilon)} {\bE[\tau]} \geq C_1(\delta)\rbr{1-\frac R {I_{XY}(q^*, q^*)}}(1-\epsilon) 
\end{align}
Since both $\delta$ and $\epsilon$ can be made arbitrarily small, this scheme attains the exponent given in Proposition \ref{prop:YamItoh}

\begin{remark}\label{RemarkDoubleGain}
Note that the gain from adaptivity is twofold here. Not only does adaptivity allow us to validate the first phase result and increase the error exponent, it also allows us to validate the result over the best possible channel $P_{0}(y|x)$. Therefore the attained exponent is higher than the optimal Burnashev \cite{burnashev1974interval} exponent for the channel $P_{q^*}(y|x)$.  
\end{remark}

%

\subsection{Two-Phase Search with Validation}\label{Sec:2PhaseAndValidation}
In this section, we show that a two-phase scheme with Yamamoto-Itoh validation achieves the best possible performance, improving upon non-adaptive strategies (with and without validation) both in maximal targeting rate and in targeting rate-reliability tradeoff. 
      
\begin{theorem}\label{Thm:ThreePhase}
Let $P_q(y|x)$ be the query observation channel and let $q^*$ be the optimal solution in \eqref{eq:rate_non_adaptive}. For any $\alpha\in (0,\tfrac{1}{2})$, there exists a $\gamma\in(0,1)$ and search scheme with error probability $\eps$ and resolution $\delta$, satisfying   
\begin{equation*}
\label{binC}
\mathbb{E} [\tau]\le  \left(\frac{\log(1\slash\alpha)}{I_{XY}(q^*,q^*)} + \frac{\log (1\slash\delta)}{\max_q I_{XY}(q,\alpha q)} + \frac{\log(1\slash\epsilon)}{C_1(\delta)}\right)/(1-\gamma).
\end{equation*}
where $\gamma\to 0$  as $\alpha\to 0$ and $\delta \to 0$.
\end{theorem}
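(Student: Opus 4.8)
The plan is to use the three-stage scheme suggested by the three terms in the bound: a \emph{coarse} non-adaptive search localizing $W$ to an interval of length $\alpha$; a \emph{fine} non-adaptive search — run \emph{inside} that interval, hence over a better channel — localizing $W$ to resolution $\delta$; and a Yamamoto--Itoh \emph{validation} stage probing the estimated $\delta$-cell. If validation fails we declare an erasure and restart the whole procedure, so that $\mathbb{E}[\tau] = (N_1+N_2+N_3)/(1-\Pr(\mathcal{E}))$ where $N_i$ is the length of stage $i$. Cyclic dithering within the relevant interval at each stage reduces that stage to a uniformly distributed target, exactly as in the proof of Theorem \ref{thrm:non-adapt}.

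\textbf{Stages I and II.} For Stage I, partition $[0,1)$ into $M_1 = 1/\alpha$ super-cells and run the random-coding scheme of Section \ref{Sec:Direct} with a $\textrm{Bern}(q^*)$ codebook at rate $R_1 = (1-\gamma_1)I_{XY}(q^*,q^*)$, giving $N_1 = \lceil\log(1/\alpha)/R_1\rceil$; since queries still span all of $[0,1)$, selecting a $q$-fraction of the super-cells has Lebesgue measure $q$, so Theorem \ref{thrm:non-adapt} applies and the probability $\Pr(\mathcal{E}_1)$ of returning the wrong super-cell is at most $2^{-N_1 E_r(R_1,q^*)}$ plus a doubly-exponentially small term, hence $\to 0$ as $\alpha\to 0$. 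For Stage II, let $\hat I_1$ be the super-cell returned, partition $\hat I_1$ into $M_2 = \alpha/\delta$ cells of length $\delta$, and run the same construction with an i.i.d.\ $\textrm{Bern}(q)$ codebook on these $M_2$ rows. The key point is that a column selecting a $q$-fraction of the $M_2$ cells yields a query set of measure $\approx q\,M_2\,\delta = \alpha q$; by Lemma \ref{Lem:DEFvanish} applied with $M_2$ rows (still valid, since $M_2=\alpha/\delta\to\infty$) the observation channel is, with doubly-exponential confidence, $P_{\alpha q}(y|x)$, while the input symbol seen by the true cell's codeword remains $\textrm{Bern}(q)$. Gallager's error-exponent analysis of Section \ref{Sec:Direct} then goes through with the channel index $\alpha q$ \emph{decoupled} from the input index $q$, so any rate $R_2 = (1-\gamma_2)\max_q I_{XY}(q,\alpha q)$ is achievable, giving $N_2 = \lceil\log(\alpha/\delta)/R_2\rceil \le \log(1/\delta)/R_2 + 1$ and $\Pr(\mathcal{E}_2)\to 0$ as $\delta\to 0$; conditioned on $\mathcal{E}_1^c$, Stage II returns the $\delta$-cell containing $W$.

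\textbf{Stage III and assembly.} Probe the returned $\delta$-cell $N_3$ times and perform the binary test $H_1$ (target in cell, outputs i.i.d.\ $P_\delta(\cdot\,|\,1)$) versus $H_0$ (outputs i.i.d.\ $P_\delta(\cdot\,|\,0)$), exactly as in the proof of Proposition \ref{prop:YamItoh}: by the Chernoff--Stein lemma the false-erasure probability $\Pr(H_0\mid H_1)$ can be made negligibly small while $\Pr(H_1\mid H_0)\le 2^{-N_3 C_1(\delta)(1-o(1))}$. On $H_1$ we output the cell's representative point; on $H_0$ we restart. Since $\{|\hat W-W|>\delta\}\subseteq\{W\notin\text{probed cell}\}=\{H_0\text{ true}\}$, the per-round error probability is at most $\Pr(H_1\mid H_0)$, which we equate to $\epsilon$ by taking $N_3 = (1+o(1))\log(1/\epsilon)/C_1(\delta)$; and $\mathcal{E}\subseteq\mathcal{E}_1\cup\mathcal{E}_2\cup\{\text{false erasure}\}$, so $\Pr(\mathcal{E})\le\gamma_0\to 0$ as $\alpha,\delta\to 0$. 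Plugging $N_1,N_2,N_3$ into $\mathbb{E}[\tau]=(N_1+N_2+N_3)/(1-\Pr(\mathcal{E}))$ and absorbing the back-offs $\gamma_1,\gamma_2$, the $o(1)$'s, the ceilings and the $1/(1-\Pr(\mathcal{E}))$ factor into a single $\gamma$ — legitimate because the $\log(1/\delta)$ term diverges while these corrections do not — yields the stated inequality with $\gamma\to 0$ as $\alpha,\delta\to 0$.

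\textbf{Main difficulty.} The conceptual core is Stage II: justifying that confining the fine search to an interval of length $\alpha$ replaces the operative channel $P_q$ by the strictly better $P_{\alpha q}$ — at the price of an input distribution $q$ that need no longer match that channel's capacity-achieving prior — and checking that Lemma \ref{Lem:DEFvanish} and the Gallager bound survive the reduced codebook size $\alpha/\delta$ and the decoupling of the two indices. The rest is careful accounting: chaining the dithering reductions through the two nested partitions, verifying that a mislocalization in either search stage is absorbed by the validation test (so it only inflates $\Pr(\mathcal{E})$, never the final error), and confirming that all slack is $o(1)$ relative to $\log(1/\delta)$.
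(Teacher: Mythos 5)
Your proposal is correct and follows essentially the same route as the paper's proof: a coarse non-adaptive phase to resolution $\alpha$, a zoomed-in non-adaptive phase over the better channel $P_{\alpha q}$ achieving rate $\max_q I_{XY}(q,\alpha q)$, a Yamamoto--Itoh validation of the $\delta$-cell via Chernoff--Stein, and restart-on-erasure giving $\mathbb{E}[\tau]=(N_1+N_2+N_3)/(1-\Pr(\mathcal{E}))$ with $\gamma=\Pr(\mathcal{E})$. If anything, you spell out more explicitly than the paper does why the Stage II analysis survives the shrunken codebook and the decoupling of the input prior $q$ from the channel index $\alpha q$, which the paper dispatches with a one-line appeal to ``appropriately scaling'' Theorem \ref{thrm:non-adapt}.
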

\begin{corollary}\label{cor:best}
By letting $\alpha$ vanish much slower than $\delta$, and noting that $\gamma$ can be arbitrarily small, we conclude that the maximal targeting rate for adaptive schemes is given by 
  \begin{equation*}
    C(0) = \max_{p\in[0,1]} I_{XY}(p,0),
  \end{equation*}
which is the capacity of the best observation channel associated with the measurements, hence the best possible. The associated targeting rate-reliability tradeoff is 
\begin{equation*}
  E(R) =  C_1(0)\left(1 -\frac{R}{C(0)}\right).
\end{equation*}
which is also the best possible.   
\end{corollary}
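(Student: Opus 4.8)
The plan is to prove the theorem by concatenating three phases --- a coarse non-adaptive search, a refined non-adaptive search inside the coarsely-located region, and a Yamamoto--Itoh validation phase --- and restarting the whole procedure whenever validation fails; the corollary then follows by letting $\alpha$ vanish slowly. \textbf{Phase 1} reruns the scheme of Theorem~\ref{thrm:non-adapt} at resolution $\alpha$: partition $[0,1)$ into $1/\alpha$ cells, use an i.i.d. $\textrm{Bern}(q^*)$ codebook of length $N_1\approx\log(1/\alpha)/R_1$ with $R_1$ just below $I_{XY}(q^*,q^*)$, and decode the most likely cell $\calI_1$; by that theorem's achievability argument (dithering plus Lemma~\ref{Lem:DEFvanish}) the probability that $w\notin\calI_1$ is exponentially small in $N_1$, hence vanishes as $\alpha\to 0$. \textbf{Phase 2} refines inside $\calI_1$, which has length $\alpha$. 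First apply a fresh dither $A_2\sim\textrm{Unif}[0,\alpha)$ cyclically within $\calI_1$, so that conditioned on $w\in\calI_1$ and on the Phase~1 transcript the target is \emph{exactly} uniform on $\calI_1$; then partition $\calI_1$ into $\alpha/\delta$ cells of length $\delta$, draw an i.i.d. $\textrm{Bern}(q)$ codebook of length $N_2\approx\log(\alpha/\delta)/R_2$, and probe only subsets of $\calI_1$. By Lemma~\ref{Lem:DEFvanish} applied to these $\alpha/\delta$ cells, every probed set has size concentrated at $q\alpha$, so the effective observation channel is $P_{q\alpha}$ with a $\textrm{Bern}(q)$ input and per-query mutual information $I_{XY}(q,q\alpha)$; optimizing over $q$ and taking $R_2$ just below $\max_q I_{XY}(q,\alpha q)$ makes the Phase~2 error exponentially small in $N_2$ (provided $\alpha/\delta\to\infty$, which holds as $\delta\to 0$ for fixed $\alpha$), and $\log(\alpha/\delta)\le\log(1/\delta)$ yields the middle term of the bound.

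\textbf{Phase 3} probes the $\delta$-cell $\wh W$ returned by Phase~2 for $N_3=\lceil\log(1/\eps)/C_1(\delta)\rceil$ steps; since the query size is $\delta$, the agent observes $N_3$ i.i.d. samples from $P_\delta(\cdot|1)$ if $w\in\wh W$ ($H_1$) and from $P_\delta(\cdot|0)$ otherwise ($H_0$). By the Chernoff--Stein lemma there is a test whose false-erasure probability $\Pr(H_0|H_1)$ vanishes while the undetected-error probability satisfies $\Pr(H_1|H_0)\le 2^{-N_3 C_1(\delta)}$. If $H_0$ is declared we discard all data and restart; if $H_1$ is declared we output $\wh W$. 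The per-round erasure probability $\Pr(\calE)$ is at most $\Pr(w\notin\calI_1)+\Pr(\text{Phase 2 error}\mid w\in\calI_1)+\Pr(H_0|H_1)$, all of which vanish, so the number of rounds is geometric with mean $1/(1-\Pr(\calE))$ and $\mathbb{E}[\tau]=(N_1+N_2+N_3)/(1-\Pr(\calE))$; an undetected error requires declaring $H_1$ under $H_0$ in the terminating round, which by an elementary geometric-series bound keeps the overall error below $\eps$ after a harmless constant is absorbed into $N_3$. Folding $\Pr(\calE)$, the $o(1)$ rate back-offs in all three phases, and the concentration slack of Lemma~\ref{Lem:DEFvanish} into a single $\gamma$ that tends to $0$ as $\alpha,\delta\to 0$ gives the theorem.

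For the corollary, take $\alpha$ vanishing much slower than $\delta$ (e.g. $\alpha=1/\log(1/\delta)$), so the first term is $o(\log(1/\delta))$; by joint continuity of $q\mapsto P_q$, $\max_q I_{XY}(q,\alpha q)\to\max_p I_{XY}(p,0)=C(0)$ and $C_1(\delta)\to C_1(0)$, hence $\mathbb{E}[\tau]\le\log(1/\delta)/C(0)+\log(1/\eps)/C_1(0)+o(\log\tfrac{1}{\delta\eps})$. Setting $\log(1/\eps)=\frac{C_1(0)}{R}\bigl(1-R/C(0)\bigr)\log(1/\delta)$ makes $\mathbb{E}[\tau]\approx\log(1/\delta)/R$, so every $R<C(0)$ is attained with reliability $E(R)=C_1(0)(1-R/C(0))$. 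Optimality follows from the matching converse: no observation channel in the model beats the limiting channel $P_0$, so a genie supplying $P_0$ at every step upper-bounds the achievable rate by $C(0)$ even under adaptivity, and the same genie-aided reduction together with the Burnashev converse (sharpened to account for the fact that accessing the reverse KL direction would require large probe regions, hence a worse channel) shows the rate--reliability tradeoff cannot exceed $C_1(0)(1-R/C(0))$.

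The main obstacle is Phase~2: because the observation channel is pinned down by the realized query sizes, one must argue that \emph{after conditioning on the success of Phase~1} the probed sets are still concentrated at size $q\alpha$ and the target is still effectively uniform on $\calI_1$ --- the fresh dither $A_2$ and a careful application of Lemma~\ref{Lem:DEFvanish} to the conditional measure are exactly what break the apparent circular dependence between the codebook, the induced channel, and the Phase~1 outcome. Everything else --- composing the three error events, the geometric-restart bookkeeping, and the continuity limits --- is routine.
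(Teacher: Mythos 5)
Your proposal is correct and follows essentially the same route as the paper: the same three-phase construction (coarse non-adaptive search at resolution $\alpha$, zoomed-in non-adaptive search at resolution $\delta$ over queries of size at most $\alpha$, Yamamoto--Itoh validation over a $\delta$-sized region with restart on erasure), the same geometric-restart accounting for $\mathbb{E}[\tau]$, and the same limiting argument $\alpha\to 0$ slower than $\delta$ giving $\max_q I_{XY}(q,\alpha q)\to C(0)$ and $C_1(\delta)\to C_1(0)$, with optimality from the monotonicity assumption (no channel beats $P_0$) plus Burnashev's converse. Your added care about re-dithering and conditioning in Phase 2 is a welcome refinement of a point the paper leaves implicit, but it does not change the argument.
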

\begin{remark}
 Juxtaposing Theorem \ref{thrm:non-adapt} and Corollary \ref{cor:best} above, we conclude that in contrast to the case of constant measurement independent noise, adaptive search strategies outperform the optimal non-adaptive strategy in both targeting rate and reliability.  
\end{remark}

\begin{proof}
We prove the theorem for a fixed $\alpha$ and $\delta,\eps\to 0$. In the first search phase, the agent employs the optimal non-adaptive search strategy with $\tau=N_1$ and resolution $\alpha$. By Theorem \ref{thrm:non-adapt}, as long as $N_1 > \frac {\log \frac 1 {\alpha}}{I_{XY}(q^*,q^*)}$, then at the end of this phase the agent knows an interval of size $\alpha$ containing the target with probability $1- \epsilon_1$, where $\epsilon_1$ vanishes as $N_1$ grows. 

In the second phase, the agent ``zooms-in'' and performs the search only within the $\alpha$-sized interval obtained in the first phase. To that end, the agent employs the optimal non-adaptive search strategy with $\tau=N_2$ and resolution $\delta$, with the query sets properly shrunk by a factor of $\alpha$. We note that in this phase, all queried sets are of size smaller than $\alpha$. Therefore, after appropriately scaling the affect of the query size in Theorem \ref{thrm:non-adapt}, as long as $N_2 > \frac{\log 1\slash\delta}{\max_q I_{XY}(q,\alpha q)}$ and given that the first phase outcome was correct (the decided $\alpha$-sized region contained the target), then at the end of the second phase the agent knows an interval of size $\delta$ containing the target, with probability $1- \epsilon_2$, where $\epsilon_2$ vanishes to zero as $N_2$ grows with the random coding error exponent given by Theorem \ref{thrm:non-adapt}. 

At this point, the agent performs the Yamamoto-Itoh validation step of length $N_3$, which queries a fixed interval of size $\delta$. If not successful, the agent repeats the whole three-phase process from scratch. By the same arguments that lead to Proposition \ref{prop:YamItoh}, the probability of falsely declaring a $\delta$ region as containing the target when the target is not contained in it (namely, not detecting an error in the first or second phase) can be smaller than $\epsilon$ as long as $N_3>\frac{\log(1\slash\epsilon)}{C_1(\delta)}$.  
If the last phase declares an erasure, we start the three-phase search from scratch. Denote the erasure event by $\varepsilon$. The probability of $\varepsilon$ is upper bounded  by the sum of the first and second phase errors, which can be made arbitrarily small by letting $N_1,N_2$ grow. Therefore, the expected stopping time of this whole procedure is $\frac{N_1+N_2+N_3}{1-\Pr(\varepsilon)}$, which is the statement of the theorem with $\gamma=\Pr(\varepsilon)$. Since  by Theorem \ref{thrm:non-adapt} the first and second phases errors vanish as $N_1,N_2$ become large, $\varepsilon$ vanishes as $N_1+N_2$ becomes large (equivalently, as $\alpha,\delta$ become small), and the proof of Theorem \ref{Thm:ThreePhase} is concluded. 

To prove the corollary,  we let $\alpha$ vanish much slower than $\delta$, such that the length of the first phase becomes negligible compared to the other two phases. Letting $N_2=\lambda N$ and $N_3 = (1-\lambda)N$ and choosing $\lambda$ as in Proposition \ref{prop:YamItoh} we arrive the the Burnashev error exponent. As $\alpha\to 0$ our search channel for the second phase becomes arbitrarily close to the best possible channel we can operate over ($P_0(y|x)$), which by our assumptions on monotonicity of the noise level with the interval size, has the highest capacity. Therefore by the channel coding theorem, no higher targeting rate can be achieved. Indeed, $\lim_{\alpha\to 0}\max_q I_{XY}(q,\alpha q)=\max_q I_{XY}(q,0)=C(0)$. Furthermore, since we achieve the optimal error exponent for channels with feedback for $P_0(y|x)$, by Burnashev's result \cite{burnashev1974interval}, we achieve the best possible rate-reliability tradeoff. 

\end{proof}

\subsection{Numerical Results for Example \ref{Ex:BSC}}
The rate-reliability tradeoff of the above suggested adaptive schemes as well as a lower bound on the non adaptive one for our binary Example \ref{Ex:BSC} is depicted in Fig. \ref{Fig:Exponents}. The gain in terms of both rate and reliability of adaptive schemes is apparent. It is seen that Forney's decision feedback scheme (labeled  by 'b') which upper bounds the sphere packing bound and hence the exponent of any non-adaptive scheme is far below the exponents attained by the adaptive validation schemes. The twofold gain mentioned in Remark \ref{RemarkDoubleGain} can be seen in the difference between the lines labeled 'c' and 'd', where 'c' is the exponent attained by performing the Yamamoto-Itoh validation on $P_{q^*}(y|x)$ while 'd' is the performance attained by validation over the best possible channel. Finally, 'e' depicts the result of Theorem \ref{Thm:ThreePhase} as applied to Example \ref{Ex:BSC}, where we added a search phase and attained both the best possible exponent and the optimal targeting rate, which pertains to the capacity of the best possible observation channel.    

\begin{figure}[htp]
\centering
\includegraphics[width=0.45\textwidth]{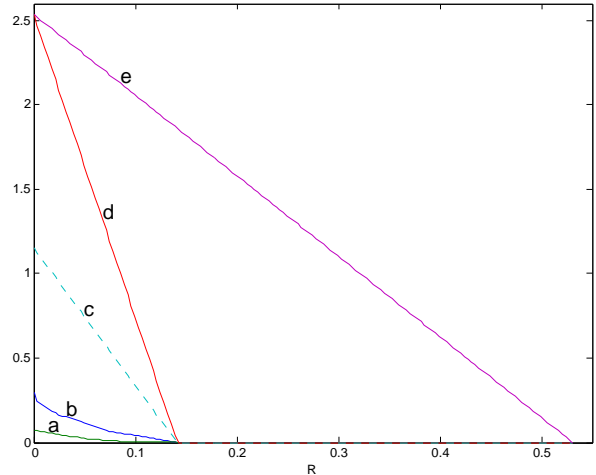} 
\caption{Error exponents (known velocity) for noise growing linearly with size: $p[0]=0.1, p[\frac 1 2]=0.45$ (a) Random coding (b) Decision feedback (c) Burnashev's upper bound for BSC$(p[q*])$ (d) Yamamoto-Itoh validation for the non-adaptive scheme (e) Yamamoto-Itoh validation for BSC$(p[0])$}.\label{Fig:Exponents}
\end{figure}

\section{Moving Target  with Unknown Velocity}\label{Sec:MovingT}
In this section, we allow the target to move at an unknown constant velocity. We will be interested in finding not only the initial location, but also the velocity, both up to some fixed resolution. As mentioned in the Introduction, adding this extra unknown parameter to the search problem should intuitively decrease the rate by a factor of two. This penalty stems from the fact that even if we could freeze the target at different times (and positions) and run our stationary target search, in order to deduce its velocity we would  need to run the search twice, and compute the velocity from the distance between the frozen positions. The results we present in this section indeed support this intuition.

Since we do not want our moving target to fall into oblivion and to forever be lost when it reaches the edges of the unit interval, in this section we consider targets moving on the unit circle. Clearly, with a sample rate of one sample per second, we will not be able to distinguish e.g. between a stationary target and a target that traverses at a velocity of exactly one full circle per second. For that reason, we consider only velocities $v$ such that $|v|\leq v_{\max}\leq \half$, where a negative velocity denotes a counterclockwise movement. While our results hold for any $v_{\max}\leq \half$, from a practical standpoint $v_{\max}$ should be a very small number; it is quite unreasonable to assume that the target moves across the entire search area within a small number of samples. 

Below we only extend the non-adaptive search to include moving targets. Extending  adaptive schemes follows along the same lines as before, where we use the non-adaptive scheme in several phases: in the first phase we narrow the search region, and narrow down both location and velocity in the second phase. To avoid repetition, we omit the details of this extension. 

\subsection{Setting and Preliminaries} 
We naturally identify the unit interval modulo $1$ with a circle of unit circumference. Correspondingly, the \textit{cyclic distance} between two points $w,w'$ in the unit interval is 
\begin{align*}
  |w-w'|_c \dfn \min\{|w-w'|, 1-|w-w'|\}
\end{align*}

The target starts at an arbitrary location, $w_0$ and at time $n$ its position is given by 
\begin{align*}
W_n = w_0 + V\cdot n \mod1
\end{align*}
where $V$ is uniformly distributed on $[-v_{\max}, v_{\max}]$ and $\vmax \leq \half$. Note that while we have no assumptions on the initial location, we do assume that the velocity of the target is uniformly distributed\footnote{we note that the dithering we used in the previous section can be adopted to this case to allow us to consider arbitrary velocity if $v_{\max}=\half$, however, this cannot be readily done for other values of $v_{\max}$. }. The noise and search models remain unchanged from the previous section. A non-adaptive strategy is comprised of $N$ search regions $\cbr{S_n}_{n=1}^N$ and two estimators $\hat{W}_N = \hat{W}_n(Y^N)$, $\hat{V} = \hat{V}(Y^N)$ where $Y^N$ is the vector of $N$ observations resulting from the $N$ queries defined by $\cbr{S_n}_{n=1}^N$. A strategy is said to have search resolution $\delta$ and error probability $\epsilon$ if for any $w_0$,
\begin{align}
	\Pr\left(\max\cbr{|\hat{W}_N - W_N|_c, |\hat{V}-V|} \leq \delta\right) \geq 1-\epsilon.
\end{align}
 
\subsection{Main result for unknown velocity}
We have the following theorem
\begin{theorem}
Let $P_{q}(y|x)$ be the query observation channel and let the target velocity be uniform on $[-v_{\max}, v_{\max}]$. Then any targeting rate $R$ satisfying 
\begin{align}
	R < \max_{q} \frac 1 2 I_{XY}(q,q)(1-2v_{\max}), 
\end{align}
is achievable with targeting reliability $\frac{E(R(1-2v_{\max}))}{1-2v_{\max}}$ using non-adaptive search strategies, where $E(R)$ is the random coding error exponent for the channel  $P_{q^*}(y|x)$. Furthermore, the maximal targeting rate for non-adaptive search strategies is at most $\max_{q} \frac 1 2 I_{XY}(q,q)$. 
\end{theorem}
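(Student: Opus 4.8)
The plan is to treat the moving-target problem as two coupled stationary-target problems that must be solved simultaneously, and to port both the achievability and converse machinery of Theorem~\ref{thrm:non-adapt} with a time-sharing penalty of $1-2v_{\max}$. For the \emph{achievability} direction, I would first reduce the two-parameter estimation task $(\hat W_N,\hat V)$ to a single discretized channel-decoding problem: partition the pairs $(w_0,v)$ into a grid of cells of side $\delta$ in location and $\delta$ in velocity, so the number of codewords is $M\asymp \frac{1}{\delta}\cdot\frac{2v_{\max}}{\delta}$, whose logarithm is $2\log(1/\delta)+O(1)$ — this is where the factor $\tfrac12$ in the rate will come from, since the effective block length needed grows accordingly. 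The key structural observation is that, because the target moves, a query region $S_n$ probed at time $n$ intersects the trajectory $\{w_0+vn \bmod 1\}$ in a way that depends on both unknowns; to keep the induced channel well-behaved I would use a \emph{shifted} codebook, i.e.\ at time $n$ the set $S_n$ is built from the $n$th column of a Bernoulli$(q^*)$ codebook but with each coordinate-to-sensor assignment interpreted relative to where a hypothetical target of velocity $v$ would be. Concretely one assigns, for each candidate velocity $v$, a ``de-rotated'' sensor grid, and the dithering argument from Section~\ref{Sec:Direct} still makes the analysis equivalent to a uniform prior. The loss of $1-2v_{\max}$ enters because the de-rotation forces some fraction of query/time pairs to be wasted (a target at the extreme velocities sweeps a $2v_{\max}$-fraction of the circle that cannot be simultaneously consistent with the nominal decoding grid), so only a $1-2v_{\max}$ fraction of the $N$ queries carry useful information at the target's true velocity; after this bookkeeping, Lemma~\ref{Lem:DEFvanish} and Gallager's random-coding bound apply verbatim to the channel $P_{q^*}(y|x)$ over the effective block length $(1-2v_{\max})N$, giving rate $\tfrac12 I_{XY}(q,q)(1-2v_{\max})$ and reliability $E(R(1-2v_{\max}))/(1-2v_{\max})$.

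For the \emph{converse} (the $\max_q \tfrac12 I_{XY}(q,q)$ bound), I would run the discretized Fano argument of the converse to Theorem~\ref{thrm:non-adapt} on the joint message $(W_0',V')$, where $W_0'$ indexes an $O(1/\delta)$ grid for the location and $V'$ indexes an $O(v_{\max}/\delta)$ grid for the velocity. Since the velocity is genuinely uniform (as assumed) and the initial location can be taken uniform (a weaker requirement, as in the stationary converse), we get $H(W_0',V') = \log\lceil \beta/\delta\rceil + \log\lceil 2v_{\max}\beta'/\delta\rceil = 2\log(1/\delta)+O(1)$. Fano's inequality and the chain rule give $H(W_0',V')\le I(W_0',V';Y^N)+N\eps'' \le \sum_n I(X_n;Y_n)+N\eps''$, exactly as in~\eqref{eq:converse}, because $X_n=\mathds{1}(W_n\in S_n)$ is still a deterministic function of $(W_0,V)$ and the non-adaptive scheme fixes $\{S_n\}$ in advance, so the Markov chain $(W_0',V',X_n,Y^{n-1})\to X_n\to Y_n$ holds. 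The only new wrinkle is computing the marginal distribution of $X_n$: when $(W_0,V)$ is uniform on $[0,1)\times[-v_{\max},v_{\max}]$, the position $W_n=W_0+Vn$ is still uniform on $[0,1)$ for every fixed $n$, so $X_n\sim\mathrm{Bern}(q_n)$ with $q_n=|S_n|$ and $I(X_n;Y_n)=I_{XY}(q_n;q_n)\le \sup_q I_{XY}(q;q)$. Dividing by $N$, using $\log M = 2\log(1/\delta)+O(1)$, and letting $N\to\infty$, $\beta\to 0$, $\eps\to 0$ yields $2R \le \sup_q I_{XY}(q;q)$, i.e.\ $R\le \max_q \tfrac12 I_{XY}(q;q)$.

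The main obstacle I anticipate is the achievability side — specifically, making precise the ``de-rotated codebook'' construction so that (i) the induced query sizes concentrate around $q^*$ uniformly over all candidate velocities (a union bound over the $O(v_{\max}/\delta)$ velocity hypotheses on top of Lemma~\ref{Lem:DEFvanish}, which is fine since the double-exponential decay crushes the polynomial factor), and (ii) the claimed $1-2v_{\max}$ fraction of ``useful'' queries is actually the right constant and not something like $1-O(v_{\max}\log(1/\delta))$. One clean way to handle (ii) is to have the decoder, for each hypothesized velocity $v$, realign the received sequence and run the stationary-target decoder; the realignment is information-lossless but the codebook columns consistent across the whole horizon at two different velocities $v\neq v'$ overlap on only a $1-|v-v'|N\bmod 1$-type fraction, and one shows that restricting attention to the $(1-2v_{\max})$-fraction of time slots where the nominal grid is unambiguous suffices. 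This is exactly the step the authors flag as ``more intricate,'' and where the gap between the direct ($\tfrac12 I_{XY}(q,q)(1-2v_{\max})$) and converse ($\tfrac12\max_q I_{XY}(q,q)$) bounds originates; I would not attempt to close that gap, only to establish both endpoints.
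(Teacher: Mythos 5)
Your converse is essentially the paper's converse: Fano on the discretized pair (position, velocity), the observation that $W_n=W_0+Vn$ is uniform for each fixed $n$ so $X_n\sim\mathrm{Bern}(|S_n|)$, and the chain of inequalities from \eqref{eq:converse} giving $2R\le\sup_q I_{XY}(q,q)$. That part is fine.

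The achievability side has a genuine gap, and it is exactly at the step you flag as the anticipated obstacle. The paper does not use a ``de-rotated codebook'' or a time-sharing argument in which a $2v_{\max}$-fraction of queries is ``wasted.'' Instead it draws a \emph{single} $M\times N$ Bernoulli$(q^*)$ matrix (with $M=N/\delta$ sensors) and defines one codeword per \emph{quantized trajectory}, read diagonally along the path $\{w_0+vn\}$ through the matrix. The factor $1-2v_{\max}$ then arises from a combinatorial fact (Lemma \ref{Lem:TrajIntersect}): two trajectories that are \emph{far} (i.e., whose confusion constitutes an error) intersect in at most $\lceil 2Nv_{\max}\rceil$ time slots, so their codewords share at most $d\le 2Nv_{\max}$ coordinates and are independent on the remaining $N-d$. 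The error analysis is a modified Gallager bound that conditions on the overlap pattern, uses $P(\bx_k,\bx_{k'})\le Q(\bx_k)\prod_{i\in\m{I}_d}Q(x_{k',i})$ on the non-overlapping coordinates, sums over $d$ from $0$ to $2Nv_{\max}$, and identifies the dominant term as the one with effective block length $N(1-2v_{\max})$. Your proposal asserts that ``Lemma \ref{Lem:DEFvanish} and Gallager's random-coding bound apply verbatim over the effective block length $(1-2v_{\max})N$,'' but this cannot be taken verbatim: the codewords of distinct velocity/position hypotheses literally share bits of the underlying matrix, so the i.i.d.-codeword assumption of the standard random-coding argument fails, and handling that dependence (together with a count of the number of quantized trajectories, Lemma \ref{Lem:NumOfTraj}, to control the union bound) is the substantive content of the proof. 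Your proposed mechanism (realignment losing consistency on a $2v_{\max}$-fraction of slots) is not developed into an actual bound, and you yourself concede the constant might come out wrong; as written, the achievability direction is not established.
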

\begin{remark}
	It is evident that the direct and converse parts of this theorem become tight as $v_{\max}$ becomes small. As pointed out above, $v_{\max}$ will be an extremely small number in any reasonable search scenario. The converse part is in line with the intuition that at least twice the measurements are needed to capture both location and velocity. The gap between the converse and direct parts stems from the fact that the paths of targets starting at different initial positions and velocities, can intersect several times during the sampling period. The most likely error to make is between two trajectories that intersect many times. While most trajectories will intersect no more than once, those with very different velocities can intersect many times during the $N$ queries. As will be shown below, the number of times trajectories can intersect is governed by $2v_{\max}$.	
\end{remark}

\begin{remark}
  Note that even as $v_{\max}\to 0$, we still incur a factor half in the rate relative to the case of a stationary target. This discontinuity stems from the order of limits; namely, even when the target is guaranteed to move at an extremely small velocity, it can still cover arbitrarily large distances given sufficient time. 
\end{remark}

\subsection{Proof of Achievability}
Each pair of initial position and velocity $(w_0,v)$ naturally induces a \textit{trajectory}:
\begin{align*}
\tau(w_0,v) \dfn \{w_0+vn \mod1\}_{n=0}^{N-1}.
\end{align*}
Set some desired resolution $\delta$. We partition the unit interval into $M = N/\delta$ equi-sized subintervals (sensors) $\{b_m\}$, of size $\delta/N$ each. The associated \textit{quantized trajectory} $m(w_0,v)$ (w.r.t. to this partition) is given by 
\begin{align*}
\hat{\tau}(w_0,v) \dfn \textrm{sensor}\left(\mu(w_0,v)\right)
\end{align*}
where $\textrm{sensor}(x) = m$ if $x\in b_m$. We say that two (quantized) trajectories with parameters $(w_0,v)$ and $(w_0',v')$ are \textit{close} if $|w_0-w_0'|_c\leq \delta$ and $|v-v'|\leq \frac{\delta}{N}$. Otherwise, we say the (quantized) trajectories are \textit{far}.  

\begin{lemma} \label{Lem:NumOfTraj}
	The number of distinct quantized trajectories is at most $(2Nv_{\max}+3)N^2M^2$. 
\end{lemma}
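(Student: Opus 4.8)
The plan is to recast a quantized trajectory as a discretized affine integer sequence and then count such sequences by slicing over the velocity parameter. Since the quantized trajectory is obtained by applying $\mathrm{sensor}(\cdot)$ coordinatewise to the real trajectory $\{w_0+vn\bmod1\}_{n=0}^{N-1}$, and $\mathrm{sensor}(x)=\lfloor Mx\rfloor$ for $x\in[0,1)$ with $M=N/\delta$ an integer, the quantized trajectory attached to a pair $(w_0,v)$ is precisely the length-$N$ sequence $\bigl(\lfloor M(w_0+vn)\rfloor\bmod M\bigr)_{n=0}^{N-1}$. Writing $a=Mw_0\in[0,M)$ and $b=Mv\in[-B,B]$ with $B\dfn M\vmax$, this equals $c_n=\lfloor a+bn\rfloor\bmod M$, $n=0,\dots,N-1$, so it suffices to bound the number of distinct sequences $(c_n)$ as $(a,b)$ ranges over $[0,M)\times[-B,B]$.

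First I would fix $b$ and bound the sequences obtained as $a$ sweeps $[0,M)$. The sequence $(c_n)$ can change only when $a+bn$ meets an integer for some $n\in\{0,\dots,N-1\}$; for each fixed $n$ the set $\{a\in[0,M):a+bn\in\mathbb Z\}$ is a coset of $\mathbb Z$ intersected with a half-open interval of integer length $M$, hence has exactly $M$ points. Over all $n$ this gives at most $NM$ ``breakpoints'', splitting the circle $[0,M)$ into at most $NM$ arcs, on each of which $(c_n)$ is constant. Thus at most $NM$ distinct sequences arise for any fixed $b$.

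Next I would let $b$ vary and control how often the combinatorial structure of this arc decomposition changes. The breakpoints attached to index $n$ are exactly the points of $[0,M)$ congruent to $-bn\ (\mathrm{mod}\ 1)$; as $b$ varies, these $M$ points translate rigidly at ``speed'' $-n$, so the families attached to two indices $n\neq n'$ exchange their cyclic interleaving on $[0,M)$ only when $-bn\equiv-bn'\ (\mathrm{mod}\ 1)$, i.e.\ when $b=m/d$ for an integer $m$ and $d\dfn|n-n'|\in\{1,\dots,N-1\}$; for a fixed $d$ there are at most $2Bd+1$ such $b$ in $[-B,B]$. Hence there are at most $\sum_{d=1}^{N-1}(2Bd+1)\le BN^2+N$ critical values of $b$. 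On each open interval between two consecutive critical values no two families swap, so the cyclic order of all breakpoints, the index labelling each of them, and therefore the value of $(c_n)$ on every arc are unchanged; at a single critical $b$ we just reuse the bound $NM$. Consequently the number of distinct quantized trajectories is at most
\[
\bigl(2(BN^2+N)+1\bigr)\cdot NM\;\le\;(2BN^2+3N)\,NM\;=\;2BN^3M+3N^2M,
\]
and substituting $B=M\vmax$ and using $M\ge1$ yields $2\vmax N^3M^2+3N^2M^2=(2N\vmax+3)N^2M^2$, as claimed.

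The two counting estimates are routine; the point that will require the most care — and what I expect to be the main obstacle — is the locality claim used last: that on each interval between consecutive critical $b$'s the family of realized sequences is genuinely constant. Making this precise means checking that within such an interval no index-$n$ family wraps past $a=0$ and no two families cross (both occur only at values $b=m/d$ already counted), that $\lceil bn\rceil$ stays constant for every $n$, and that $\lfloor a+bn\rfloor$ cannot change on a fixed arc (it changes only as $a$ meets an index-$n$ breakpoint, which does not occur since the arc endpoints are fixed). The modular wraparound on the circle $[0,M)$ is the delicate part of this bookkeeping; the rest is elementary.
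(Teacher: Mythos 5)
Your proof is correct, and it takes a genuinely different route from the paper's. The paper argues trajectory-by-trajectory: it buckets quantized trajectories by their start and end sensors (a factor $M^2$), observes that the velocity of a trajectory joining two fixed sensor centers is pinned down up to its winding number $k$, with $|k|\le Nv_{\max}+1$ (a factor $2Nv_{\max}+3$), and then bounds the number of quantized trajectories shadowing a given ``center'' trajectory by the choice of the two time instants at which they diverge and re-merge (a factor $N^2$). Your argument instead works in the parameter space $(a,b)=(Mw_0,Mv)$: you count the cells of the arrangement on which the sequence $\lfloor a+bn\rfloor \bmod M$ is constant, slicing first in $a$ (at most $NM$ arcs per fixed $b$, noting that $a=0$ is always a breakpoint of the $n=0$ family so the circle count is exact) and then in $b$ (at most $BN^2+N$ critical slopes where two breakpoint families collide). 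The two decompositions are unrelated, yet they land on exactly the same constant. What your route buys is rigor and generality: the paper's $N^2$ step (``diverges at some time instance and merges back at another'') is stated informally and leans on the claim that a quantized trajectory stays within one sensor of a nearby center trajectory, whereas your cell-counting argument is explicit and would extend to richer motion models by counting cells of the corresponding arrangement. The price is the wraparound bookkeeping you flag at the end; your sketch of it is sound --- the only events at which a value $c_n$ on a tracked arc can change are collisions of breakpoint families, including collisions with the static family $n=0$ (which is exactly what keeps $\lceil bn\rceil$ constant between critical slopes), and all of these are accounted for among your critical values $b=m/d$. Since the lemma is only used through $\log\widetilde{M}/N$, only the polynomial order of the bound matters, and either proof suffices for the rate analysis.
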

\begin{proof}
Let $\mathcal{T}$ be the set of all trajectories, and let $\mathcal{T}(m,m')\subset \mathcal{T}$ be the set of all trajectories $\tau$ that start in sensor  $m$ and end in sensor $m'$, i.e., $\textrm{sensor}(\tau(0)) = m$ and $\textrm{sensor}(\tau(N-1)) = m'$. Moreover, let $\mathcal{T}_c(m,m')$ be the set of all the trajectories in $\mathcal{T}(m,m')$ that start and end at exactly the centers of the sensors $m$ and $m'$. Let $\widehat{\mathcal{T}},\widehat{\mathcal{T}}(m,m')$ and $\widehat{\mathcal{T}}_c(m,m')$ be the corresponding sets of quantized trajectories. 

We are interested in upper bounding $|\widehat{\mathcal{T}}|$. To that end, we first write 
\begin{align*}
|\widehat{\mathcal{T}}| &= \sum_{m,m'\in[M]}|\widehat{\mathcal{T}}(m,m')| \\
&\leq M^2\cdot \max_{m,m'\in[M]} |\widehat{\mathcal{T}}(m,m')|\\
&\leq M^2\cdot \max_{m,m'\in[M]} |\widehat{\mathcal{T}}_c(m,m')| \cdot \max_{m,m'\in[M]}\frac{|\widehat{\mathcal{T}}(m,m')|}{|\widehat{\mathcal{T}}_c(m,m')|}\\
&\leq (2Nv_{\max}+3)M^2\cdot \max_{m,m'\in[M]}\frac{|\widehat{\mathcal{T}}(m,m')|}{|\widehat{\mathcal{T}}_c(m,m')|}.
\end{align*}
For the last inequality we observe the following: The distance covered by a target that starts at location $w$ and ends up at location $w'$ is $|w'-w|_c + k$ for some integer $k$, hence the velocity of the target must be $v = \frac{|w'-w|_c}{N} + \frac{k}{N}$. Since $|v|\leq v_{\max}$ it must hold that $|k| \leq Nv_{\max} +1$. 

To conclude the proof, we now show that $|\widehat{\mathcal{T}}(m,m')|\leq N^2\cdot |\widehat{\mathcal{T}}_c(m,m')|$. Let $\hat{\tau}\in \widehat{\mathcal{T}}(m,m')$, and let $(w,v)$ be its initial position and velocity. It is easy to see that there must exist $\hat{\tau}_c\in \widehat{\mathcal{T}}_c(m,m')$ with velocity $v_c$ such that $|v-v_c|\leq \frac{1}{2MN}$. Thus, both quantized trajectories are always at a distance of at most $\frac{1}{M}$ (one sensor) from each other. Moreover, since both start and finish at the same two sensors, it must be that either $\hat{\tau} = \hat{\tau}_c$, or that $\hat{\tau}$ diverges from the $\hat{\tau}_c$ at some time instance and merges back to at another time instance. For any fixed $\hat{\tau}$. the number of such quantized trajectories $\hat{\tau}$ is at most $1+2{N\choose 2} \leq N^2$, where the factor $2$ corresponds to the fact that $\hat{\tau}$ can diverge either forward (in case it is faster) or backward (in case it is slower).

\end{proof}

\begin{lemma}\label{Lem:TrajIntersect}
	If two quantized trajectories are far, then they intersect at most $\lceil 2Nv_{\max}\rceil$ times. 
\end{lemma}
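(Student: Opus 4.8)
The plan is to pass to the continuous picture and charge each ``intersection'' to a genuine crossing of the underlying continuous trajectories. Write $\Delta w = w_0'-w_0$, $\Delta v = v'-v$ (so $|\Delta v|\le 2v_{\max}$), and let $d(t) = (\Delta w + \Delta v\, t)\bmod 1$ be the cyclic displacement between the two continuous trajectories $w_0+vt$ and $w_0'+v't$; it is affine-mod-$1$ in $t$ with slope $|\Delta v|$. Since the sensors have width $1/M=\delta/N$, if the two quantized trajectories share a sensor at an integer time $n\le N-1$ then necessarily $\|d(n)\|<1/M$, where $\|\cdot\|$ denotes distance to the nearest integer. I will read an ``intersection'' as a maximal block of consecutive times at which the two quantized trajectories occupy the same sensor, and bound the number of such blocks.

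\textbf{The ``position-far'' case.}
Suppose the trajectories are far because $|w_0-w_0'|_c>\delta$ while $|\Delta v|\le\delta/N$. Then for every $n\le N-1$ the triangle inequality gives $\|d(n)\|\ge |\Delta w|_c - n|\Delta v| > \delta-(N-1)\tfrac{\delta}{N} = \tfrac{\delta}{N}=\tfrac1M$, so the quantized trajectories never share a sensor and there are no intersections at all, well within the claimed bound.

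\textbf{The ``velocity-far'' case $|\Delta v|>1/M$.}
Here the set $\{t\in\reals:\|d(t)\|<1/M\}$ is a disjoint union of open intervals, one around each zero of $d$ (a \emph{crossing}, i.e.\ a time at which the continuous trajectories coincide), each of length $\tfrac{2}{M|\Delta v|}$, which is strictly less than $2$ precisely because $|\Delta v|>1/M$. Any intersection block is a set of consecutive integers all satisfying $\|d(\cdot)\|<1/M$; mapping each block to the interval containing its \emph{first} time, two distinct blocks (separated by a non-co-sensor integer, hence by at least two in index) cannot land in the same interval, since each interval has length below $2$. So the number of intersections is at most the number of crossings whose interval meets $\{0,\dots,N-1\}$ (the ``active'' crossings). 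To count those, note consecutive crossings are $1/|\Delta v|$ apart in time, so the integers attached to consecutive active crossings differ by more than $\tfrac{1-2/M}{|\Delta v|}$; with $k$ active crossings this forces $(k-1)\tfrac{1-2/M}{|\Delta v|}<N-1$, i.e. $k-1<\tfrac{(N-1)|\Delta v|}{1-2/M}\le \tfrac{2v_{\max}(N-1)N}{N-2\delta}<2v_{\max}N$, the last inequality being exactly where $\delta<\tfrac12$ (so $N-1<N-2\delta$) enters. Since $k-1$ is an integer this gives $k\le\lceil 2Nv_{\max}\rceil$.

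\textbf{Main obstacle.}
The delicate step is the charging argument in the velocity-far case: ``cyclic distance below one sensor width'' is only \emph{necessary}, not sufficient, for sharing a sensor (the two points may straddle a sensor boundary), so an intersection block can overlap more than one of the intervals $\{\|d\|<1/M\}$ — using the block's first time to define the map is what restores injectivity — and one must also handle crossings that fall just outside the horizon $[0,N-1]$ yet still capture the endpoint time $0$ or $N-1$. Tracking the strict inequalities carefully together with the two hypotheses $\delta<\tfrac12$ and $v_{\max}\le\tfrac12$ is what makes the count land on $\lceil 2Nv_{\max}\rceil$ exactly rather than on $\lceil 2Nv_{\max}\rceil+O(1)$.
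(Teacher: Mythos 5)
Your proof is correct and follows essentially the same route as the paper's: the same two-case split on whether the trajectories are far in velocity or far in initial position, with the velocity-far count obtained by charging intersections to wraps of the relative displacement around the circle, whose number is bounded by the total relative distance $2v_{\max}N$. Your write-up is simply a careful, rigorous elaboration (via the displacement function $d(t)$, the intervals of length $2/(M|\Delta v|)$ around each crossing, and the block-to-crossing injection) of the paper's two-sentence sketch, which asserts directly that successive intersections require a full relative circuit.
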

\begin{proof}
Consider two quantized trajectories with parameters $(w_0,v)$ and $(w_0',v')$, and assume they are far from each other. There are two cases: either $|v-v'|>\delta/N$, or both $|v-v'|\leq \delta/N$ and $|w_0-w_0'|_c > \delta$. The second case means that the trajectories start at least $N$ sensors apart, and since their absolute relative velocity is at most $\delta/N$ (a sensor per second), they can intersect at most once. Let us now consider the first case. Since here the absolute relative velocity is at least one sensor per second, then once the quantized trajectories intersect, their next intersection can occur only once the relative distance covered by the trajectories is at least one full circle. Since the total relative distance covered is at most $2v_{\max}N$, the maximal number of intersections is at most $\lceil 2v_{\max}N\rceil$. 
\end{proof}

Now, we proceed to prove the achievability via random coding using an input distribution $q^*$ that achieves the supremum in \eqref{eq:rate_non_adaptive}. As in the zero velocity case, we define a binary matrix, with $M$ rows and $N$ columns and draw its elements $B_{i,j}$ $i.i.d$ with $\Pr(B_{i,j}=1)=q^*$. Each row represents a specific sensor on the unit circle, where the first and last rows represent two adjacent sensors. In contrast to the zero velocity case, not only the rows are defined as codewords but also any path from left to right along this matrix is a possible codeword. In fact, each quantized trajectory $\hat{\tau}(w_0,v)$ naturally defines a diagonal line within the matrix that starts at the sensor containing $w_0$ and whose ``slope'' is determined by $v$. We can think of the bits visited along the way as a codeword. Collecting all those codewords generates a codebook $\cbr{\bx_k}_{k=1}^{\widetilde{M}}$, where $\widetilde{M} \leq 2N^2M^2$ by lemma  \ref{Lem:NumOfTraj}. Unlike the standard analysis of random coding, since trajectories overlap, their associated codewords are not independent across the overlaps, which complicates the analysis. On the bright side however, note that for a given correct codeword (corresponding to the true quantized target trajectory), the decoder can choose any one of the codewords that correspond to any of trajectory that is close to the correct one (where closeness is as defined above), and that will be considered a correct decoding. More specifically, for the correct codeword $\bx_k$, let $\calT_k$ be the set of all codewords that correspond to trajectories that are far from $\bx_k$. The decoder will err if and only if it decides in favor of some codeword from  $\calT_k$. 

To keep things simple, we employ a maximum-likelihood decision rule that selects the codeword $\bx$ that maximizes $P(\by|\bx)$ (with ties broken arbitrarily). Thus, a decoding error will occur only if $\exists x_{k'}\in\calT_k$ such that $P(\by|\bx_{k'})\geq P(\by|\bx_k)$, where $\bx_k$ is the correct codeword/trajectory. Note that due to the dependencies between the codewords, and since making an error that is not far from the truth is considered okay, this decoding rule is not necessarily the optimal one. 

As before, we apply Lemma \ref{Lem:DEFvanish} and the event $\m{A}$ to allow us to analyze the decoder for the channel $P_{q^*}(y|x)$. 
Following the same steps that led us to \eqref{eq:ExpInit}, we need to analyse $\Pr(e|\bx_k,\by,\m{A})$. We have
\begin{align*}
 	\Pr(e|\bx_k,\by,\m{A}) \leq \rbr{\sum_{k'\in \calT_k}\Pr(\m{E}_{k'}|\m{A})}^{\rho}
\end{align*}
where $\m{E}_{k'}$ is the event of erroneously declaring $\bx_k'$ when $\bx_k$ is the correct trajectory and
\begin{align}
 	\Pr(\m{E}_{k'}|\m{A}) = \sum_{\bx_{k'}\in\m{T}_k: P_{q^*}(\by|\bx_k)\leq  P_{\m{A}}(\by|\bx_{k'})} \Pr(\bx_{k'}|\bx_k,\m{A}) \label{eq:Moving1}
\end{align}

Repeating the steps that led to \eqref{eq:CondisUncond} we obtain
\begin{align}
	\Pr(\bx_{k'}|\bx_k,\m{A}) &= \frac{\Pr(\bx_{k'},\bx_k|\m{A})}{\Pr(\bx_{k}|\m{A})}\nl
	&\leq \frac {P(\bx_k,\bx_{k'})} {Q(\bx_{k})(1-\Pr(\m{A}^c)/Q_{min}^N)^2},
\end{align}
however, now  $\bx_k$ and $\bx_{k'}$ are independent only in the parts where they do not overlap. Using Lemma \ref{Lem:DEFvanish} and Taylor's expansion there exist an $\alpha_N$ such that $\frac 1 {(1-\Pr(\m{A}^c)/Q_{min}^N)^2} = 1-\alpha_N$ with $\alpha_N$ vanishing double exponentially in $N$. Substituting this into \eqref{eq:Moving1} and using Gallager's \cite{GallagerBook} method we arrive at:
\begin{align}
	\Pr(\m{E}_{k'}|\m{A}) \leq (1-\alpha_N)\sum_{\bx_{k'}\in\m{T}_k}  \frac {P(\bx_k,\bx_{k'})} {Q(\bx_{k})} \rbr{\frac{P_{q^*}(\by|\bx_{k'})}{P_{q^*}(\by|\bx_k)}}^{\frac 1 {1+\rho}}\label{eq:b4BreakingSum}
\end{align}
Now, using Lemma \ref{Lem:TrajIntersect}, we know that all $\bx_{k'}\in \m{T}_k$ intersect with $\bx_k$ at no more than $2Nv_{\max}$ indices. 
Let $\m{T}_k(d)$ denote the subset of trajectories in $\m{T}_k$ that contains all trajectories with exactly $d$ intersections with $\bx_k$, hence $\m{T}_k = \sum_{d=0}^{2Nv_{\max}} \m{T}_k(d)$. Using this, we rewrite \eqref{eq:b4BreakingSum} as
\begin{align}
	\Pr(\m{E}_{k'}|\m{A}) \leq (1-\alpha_N)\sum_{d=0}^{2Nv_{\max}}\sum_{\bx_{k'}\in\m{T}_k(d)}  \frac {P(\bx_k,\bx_{k'})} {Q(\bx_{k})} \rbr{\frac{P_{q^*}(\by|\bx_{k'})}{P_{q^*}(\by|\bx_k)}}^{\frac 1 {1+\rho}}
\end{align}
For a given $d$, let the set of indices where $\bx_k$ and $\bx_{k'}$ do not intersect be denoted by $\m{I}_d$. We know that there are exactly $N-d$ indices in this set. We have that\footnote{this is an inequality since the left-hand-side can be zero if the codewords do not agree on the coordinates where they overlap.} $P(\bx_{k},\bx_{k'}) \leq  Q(\bx_k)\prod_{i\in\m{I}_d}Q(x_{k',i})$. In addition, using the memorylessness of the channel and since $P_{q^*}(y_i|x_{k,i})=P_{q^*}(y_i|x_{k',i})$ on all indices where the trajectories corresponding to the codewords intersect, we have
\begin{align}
	\Pr(\m{E}_{k'}|\m{A}) \leq (1-\alpha_N)\sum_{d=0}^{2Nv_{\max}}\sum_{\bx_{k'}\in\m{T}_k(d)}  \prod_{i\in \m{I}_d} Q(x_{k',i}) \rbr{\frac{P_{q^*}(y_i|x_{k',i})}{P_{q^*}(y_i|x_{k,i})}}^{\frac 1 {1+\rho}}
\end{align}
After substituting the above equation in \eqref{eq:ExpInit} we get
\begin{align}
& \Pr(e|\m{A})\nl
 & \leq \widetilde{M}^{\rho}(1-\alpha_N) \int_{\by}d\by\sum_{\bx_k}\Pr(\bx_k|\m{A})P_{q^*}(\by|\bx_k)\sbr{\sum_{d=0}^{2Nv_{\max}}\sum_{\bx_{k'}: k'\in\m{T}_k(d)}  \prod_{i\in \m{I}_d} Q(x_{k',i}) \rbr{\frac{P_{q^*}(y_i|x_{k',i})}{P_{q^*}(y_i|x_{k,i})}}^{\frac 1 {1+\rho}}}^{\rho}\nl
 &= \widetilde{M}^{\rho}\gamma_N \int_{\by}d\by\sum_{\bx_k}Q(\bx_k)P_{q^*}(\by|\bx_k)\sbr{\sum_{d=0}^{2Nv_{\max}}\sum_{\bx_{k'}: k'\in\m{T}_k(d)}  \prod_{i\in \m{I}_d} Q(x_{k',i}) \rbr{\frac{P_{q^*}(y_i|x_{k',i})}{P_{q^*}(y_i|x_{k ,i})}}^{\frac 1 {1+\rho}}}^{\rho}\nl
 &\leq \widetilde{M}^{\rho}\gamma_N \int_{\by}d\by\sum_{\bx_k}Q(\bx_k)P_{q^*}(\by|\bx_k)\sum_{d=0}^{2Nv_{\max}}\sbr{\sum_{\bx_{k'}: k'\in\m{T}_k(d)}  \prod_{i\in \m{I}_d} Q(x_{k',i}) \rbr{\frac{P_{q^*}(y_i|x_{k',i})}{P_{q^*}(y_i|x_{k ,i})}}^{\frac 1 {1+\rho}}}^{\rho}\label{eq:Gallieq}\\
 &=  \widetilde{M}^{\rho}\gamma_N \sum_{d=0}^{2Nv_{\max}} \int_{\by}d\by\sum_{\bx_k}\Pr(\bx_k)P_{q^*}(\by|\bx_k)\sbr{\sum_{\bx_{k'}\in\m{T}_k(d)}  \prod_{i\in \m{I}_d} Q(x_{k',i}) \rbr{\frac{P_{q^*}(y_i|x_{k',i})}{P_{q^*}(y_i|x_{k,i})}}^{\frac 1 {1+\rho}}}^{\rho}\nl
 &=  \widetilde{M}^{\rho}\gamma_N \sum_{d=0}^{2Nv_{\max}} \int_{\by}d\by\sum_{\bx_k}\Pr(\bx_k)P_{q^*}(\by|\bx_k)\sbr{\sum_{\bx_{k'}\in\m{T}_k(d)}  \prod_{i=1}^{N-d}Q(x)\prod_{i\in \m{I}_d}  \rbr{\frac{P_{q^*}(y_i|x)}{P_{q^*}(y_i|x_{k,i})}}^{\frac 1 {1+\rho}}}^{\rho}
\end{align}
where $\gamma_N\eqd 1-\alpha_N$ converges to unity double exponentially fast and where \eqref{eq:Gallieq} holds since $\sbr{\sum_ia_i}^{\rho} \leq \sum_i a_i^{\rho}$ for any $0\leq \rho \leq 1$ and $a_i\geq 0$. Note that for any $d$ and any choice of $\bx$, while the bracketed term contains summation over vectors of length $N$, the product terms are over only $N-d$ coordinates, hence effectively we can think of the vectors as being of length $N-d$. Since we iterate through all possible $2^N$ binary vectors and integrate over all possible output sequences, after some standard Gallager-type manipulations we have that  
\begin{align}
\Pr(e|\m{A})& \leq  \widetilde{M}^{\rho}\sum_{d=0}^{2Nv_{\max}}\gamma_N \prod_{i=1}^{N-d}\int_{y}dy\sbr{\sum_{x}  Q(x)P^{\frac 1 {1+\rho}}_{q^*}(y|x)}^{1+\rho} \nl
&= \sum_{d=0}^{2Nv_{\max}} 2^{-(N-d)(E_0(\rho) - \rho \log( \widetilde{M})/(n-d)) + \log\gamma_N} 
\end{align}
Since the number of elements in the sum grows linearly with $N$, the dominating  element will be the one with the largest exponent, which is given by 
\begin{align}
 	2^{-N(1-2v_{\max}))(E_0(\rho) - \rho \log( \widetilde{M})/n(1-2v_{max})) + \log\gamma_N} 
\end{align}

\subsection{Proof of Converse}
The converse follows the same lines as the converse for stationary targets, only now there are two unknowns: initial position and velocity.
Denote the fixed stopping time by $\tau = N$. Let  $\{S_n\}_{n=1}^N$ be any non-adaptive strategy achieving an error probability $\eps$ with search resolution $\delta$. Partition the unit interval into $\lceil\beta/\delta\rceil$ equi-sized intervals for some constant $\beta\in(0,\tfrac{1}{2})$, and let $W_N'$ be the index of the interval containing $W_N$. Similarly, partition $[-v_{\max, v_{\max}}]$ into $\lceil2v_{\max}\beta/\delta\rceil$ equi-sized intervals and let $V'$ be the index of the interval containing $V$. It is easy to see that the scheme $\{S_n\}$ can be tweaked to output $W_N', V'$ with error probability at most $\eps'\dfn \eps + 4\beta(1-\beta)$, where the latter addend stems from the probability that $(\wh{W}_N,\wh{V})$ is too close to a boundary point. 

Note that $X_n\sim\mathrm{Bern}(q_n)$ where $q_n\dfn |S_n|$, and that $Y_n$ is obtained from $X_n$ through a memoryless time-varying channel $P_{q_n}(y|x)$. Following the steps of the converse to the channel coding theorem, we have
\allowdisplaybreaks{\begin{IEEEeqnarray}{rCl}\label{eq:converse}
2\log\left(\frac{\beta}{\delta}\right) +\log(2v_{\max})&=& H(W_N',V') \notag\\ 
&=& I(W_N',V';Y^N) + H(W_N',V'|Y^N) \notag\\ 
&\stackrel{(\textrm{a})}{\leq}& I(W_N',V';Y^N) + N\eps' \notag\\ 
&=& \sum_{n=1}^N I(W_N',V';Y_n|Y^{n-1}) + N\eps' \notag\\ 
&\leq& \sum_{n=1}^N I(W_N',V',Y^{n-1};Y_n) + N\eps' \notag\\ 
&\leq& \sum_{n=1}^N I(W_N',V',W_n,Y^{n-1};Y_n) + N\eps' \notag\\ 
&\stackrel{(\textrm{b})}{=}& \sum_{n=1}^N I(X_n;Y_n) + N\eps' \notag\\ 
&\stackrel{(\textrm{c})}{=}& \sum_{n=1}^N I_{XY}(q_n,q_n) + N\eps', 
\end{IEEEeqnarray}}
where (a) is by virtue of Fano's inequality, (b) follows since $X_n$ is a function of $W_n$ and the measurement noise is independent across time, and (c) stems from the fact that the channels $\cbr{P_{q_n}(y|x)}_{n=1}^N$, while time varying, are a fixed function of the codebook. Dividing by $N$ we obtain 
\begin{IEEEeqnarray*}{rCl}
R &=& \frac{\log(1/\delta)}{N} \leq  \frac{1}{2N}\left(\sum_{n=1}^N I_{XY}(q_n,q_n) - 2\log\beta -\log(2v_{\max})\right) + \frac{\eps'}{2} \\
&\leq&  \frac{1}{2}\left(\sup_{q\in(0,\frac{1}{2})} I_{XY}(q,q) -\frac{2\log\beta-\log(2v_{\max})}{N}+ \eps + 4\beta(1-\beta)\right).
\end{IEEEeqnarray*}
Noting that the inequality above holds for any $\beta\in(0,\tfrac 1 2)$, the converse now follows by taking the limit $N\to\infty$, and then requiring $\eps\to 0$. 


\section{Conclusions and Further Research} \label{Sec:Conclusion}
In this paper, we considered the problem of acquiring a target moving with known/unknown velocity on a circle, starting from an unknown position, under the physically motivated observation model where the noise intensity increases with the size of the queried region. For a known velocity, we showed that unlike the constant noise model, there can be a large gap in performance (both in targeting rate and reliability) between adaptive and non-adaptive search strategies.  Furthermore, we demonstrated that the cost of accommodating an unknown velocity in the non-adaptive setting, incurs a penalty factor of at least two in the targeting rate.  

One may also consider other search performance criteria, e.g., where the agent is cumulatively penalized by the size of either the queried region or its complement, according to the one containing the target. The rate-optimal scheme presented herein, which is based on a two-phase random search, may be far from optimal in this setup. In such cases we expect that sequential search strategies, e..g, ones based on posterior matching \cite{Shayevitz11,naghshvar2013extrinsic}, would exhibit superior performance as they naturally shrink the queried region with time. Other research directions include more complex stochastic motion models, as well as searching for multiple targets (a ``multi-user'' setting). For the latter, preliminary results indicate that the gain reaped by using adaptive strategies vs. non-adaptive ones diminishes as the number of targets increases \cite{MeGT2015}.

\bibliographystyle{IEEEtran}
\bibliography{Bib-1}

\end{document}